\theoremstyle{plain}
\newtheorem{lemma}{Lemma}
\theoremstyle{definition}
\newtheorem{defn}{Definition}
\begin{document}
\title{Optimal Any-Angle Pathfinding on a Sphere}

\author{\name Volodymyr Rospotniuk \email volodymyr.rospotniuk@90poe.io \\
\name Rupert Small \email rupert.small@90poe.io \\
       \addr Ninety Percent of Everything, Portman House,\\
       2 Portman Street, London, United Kingdom, W1H 6DU}

\maketitle

\begin{abstract}
Pathfinding in Euclidean space is a common problem faced in robotics and computer games. For long-distance navigation on the surface of the earth or in outer space however, approximating the geometry as Euclidean can be insufficient for real-world applications such as the navigation of spacecraft, aeroplanes, drones and ships. This article describes an any-angle pathfinding algorithm for calculating the shortest path between point pairs over the surface of a sphere. Introducing several novel adaptations, it is shown that Anya as described by \cite{harabor} for Euclidean space can be extended to Spherical geometry. There, where the shortest-distance line between coordinates is defined instead by a great-circle path, the optimal solution is typically a curved line in Euclidean space. In addition the turning points for optimal paths in Spherical geometry are not necessarily corner points as they are in Euclidean space, as will be shown, making further substantial adaptations to Anya necessary.
\textit{Spherical Anya} returns the optimal path on the sphere, given these different properties of world maps defined in Spherical geometry. It preserves all primary benefits of Anya in Euclidean geometry, namely the \textit{Spherical Anya} algorithm always returns an optimal path on a sphere and does so entirely on-line, without any preprocessing or large memory overheads. Performance benchmarks are provided for several game maps including Starcraft and Warcraft III as well as for sea navigation on Earth using the NOAA bathymetric dataset. Always returning the shorter path compared with the Euclidean approximation yielded by Anya, \textit{Spherical Anya} is shown to be faster than Anya for the majority of sea routes and slower for Game Maps and Random Maps.
\end{abstract}

\section{Introduction}
\label{Introduction}
Pathfinding on a two-dimensional Euclidean grid is a well-studied mathematical problem with a wide range of practical applications, from logistics to gaming. A fundamental limitation of many early path-finding algorithms in this domain is their inability to travel along any angle on the grid due to the search path being constrained to 8 degrees of freedom: a left, right, up or down step to the nearest neighbouring node, or some pairwise combination of these resulting in a diagonal step.

Any-angle pathfinding refers to all pathfinding methodologies which remove this artificial constraint to the points of the grid by permitting movement in arbitrary directions, many of these being variations on A* \cite{hart}, a breakthrough graph search algorithm which uses a heuristic function to determine the most promising directions to explore the graph and thereby to reach the target point. These include Field D* \cite{ferguson}, Theta* \cite{nash}, Block A* \cite{yap}, Hierarchical Pathfinding A* (HPA*)\cite{botea}, and Accelerated A* \cite{sislak}. Amongst other strategies, these algorithms take various novel approaches to partitioning the search space or post-processing A* routes by removing nodes satisfying certain line-of-sight or collision avoidance criteria in order to selectively reduce the total route distance compared with A*. In so doing the resulting paths are shorter than the paths produced by graph-constrained approaches such as A* and so return routes which are closer to optimal. Still further modifications involve machine learning, for example whereby the measured and heuristic cost functions of A* are learned using a Recurrent Neural Network (RNN) \cite{wang} and myriad others which build not on A* but instead apply agent-based models with gradient descent optimising a loss function, such as Deep Q-learning \cite{panov} or those utilising LSTMs on a training set (for instance one generated using A*) such as \cite{sartoretti}\cite{mayur} or GAN-based approaches such as \cite{soboleva}. These Machine Learning approaches forfeit accuracy for speed and do not guarantee optimality.

Existing any-angle pathfinding algorithms for Euclidean space which guarantee optimality include Visibility Graphs \cite{perez}, Tangent Graphs \cite{liu}, Continuous Dijkstra based approaches \cite{mitchell}\cite{hershberger} and of course Anya \cite{harabor}. Anya capitalises on the fact that all turning points on an optimal route in Euclidean space will coincide with the corner points of obstacles, enabling the optimal path to be obtained via online calculation of root point--interval pairs. These are projected to form visibility cones, permitting Anya to search wide intervals of space in aggregate, rather than evaluating visibility between neighbouring point-pairs at each grid point of the world map. This mechanism of projecting into space using line-of-sight intervals is what makes Anya superior in speed to all known path-finding algorithms with an optimality guarantee and indeed, for some maps, faster than many without this guarantee \cite{uras2015b}.

For real-world applications demanding high accuracy over long-distance voyages it becomes imperative to extend the approach of Anya to spherical coordinates. This opens up an abundance of applications such as space, flight and sea voyage planning, all unsuited to being treated as an approximation of a route in Euclidean space. Transforming the world map representation with a Gnomonic projection whence all straight lines are great circle lines and vice versa, at first glance permits Anya to be implemented without modification; great-circle lines being straight lines in the Gnomonic representation of a world map. The reason this approach falls short of achieving optimality is discussed in Appendix \textit{A}. Instead, using Anya as a foundation it is shown that a new algorithm, \textit{Spherical Anya}, can be defined which inherits the beneficial properties of its progenitor and generalises the algorithm to routes in spherical geometry. To the extent possible, the fundamental concepts and definitions of Anya are preserved. The definition of a \textit{Grid} in Spherical Anya remains almost the same, as do the definitions of \textit{intervals}, \textit{intermediate points} and \textit{intersections}. \textit{Flat projections} also remain virtually identical to flat projections in Anya, whereas \textit{search nodes}, \textit{cone projections}, \textit{successors} and \textit{observability} in Spherical Anya must be defined differently due to the great-circle geodesic of Spherical geometry. Most significantly, the foundational theorem on which Anya depends -- that of all turning points on an optimal any-angle path being corner points -- no longer holds in Spherical geometry. A new type of non-observable cone projection must be introduced as a result, without which not all intervals in the Spherical world map would be explored. This ensures that the optimality, completeness and termination guarantees established for Anya extend to Spherical Anya.

\section{Spherical Anya Components and Definitions}
\label{anya}
Anya pathfinding on a flat Euclidean grid takes a distinctly different approach to A-star varieties of pathfinding which restrict travel to discrete pairwise-visible points. Instead, the approach taken by Anya capitalises on the fact that all turning points in optimal Euclidean paths are corner points \cite{mitchell} making the fundamental unit of abstraction an interval of points $[a, b]$, being either traversable or non-traversable, as opposed to a sequence of discrete nodes. It is in this way, by defining the world map width $W$ height $H$, in terms of traversable and non-traversable intervals and subsequently generating cone-- and flat-- projections along these, that Anya will select the optimal traversable path through world map intervals without manually inspecting all rational points contained in the Farey Sequence of order $n = min(W,H)$. The result is an optimal pathfinding algorithm in Euclidean space with superior speed.

To exhibit the algorithm defining Spherical Anya it is useful to revisit and transport the fundamental terminologies and components of Anya, these being used either directly or adapted for Spherical Anya, where the geodesic is a great-circle line. Wherever possible the foundational definitions of \cite{harabor}\cite{harabor2} are maintained and adaptations and extensions are introduced only as necessitated by the novel properties of Spherical geometry versus Euclidean geometry.

\begin{figure*}[h!]
\centering
\begin{tikzpicture}[scale=1]
\node at (0,0) {\includegraphics[angle=-90,origin=c,width=.4\linewidth, keepaspectratio]{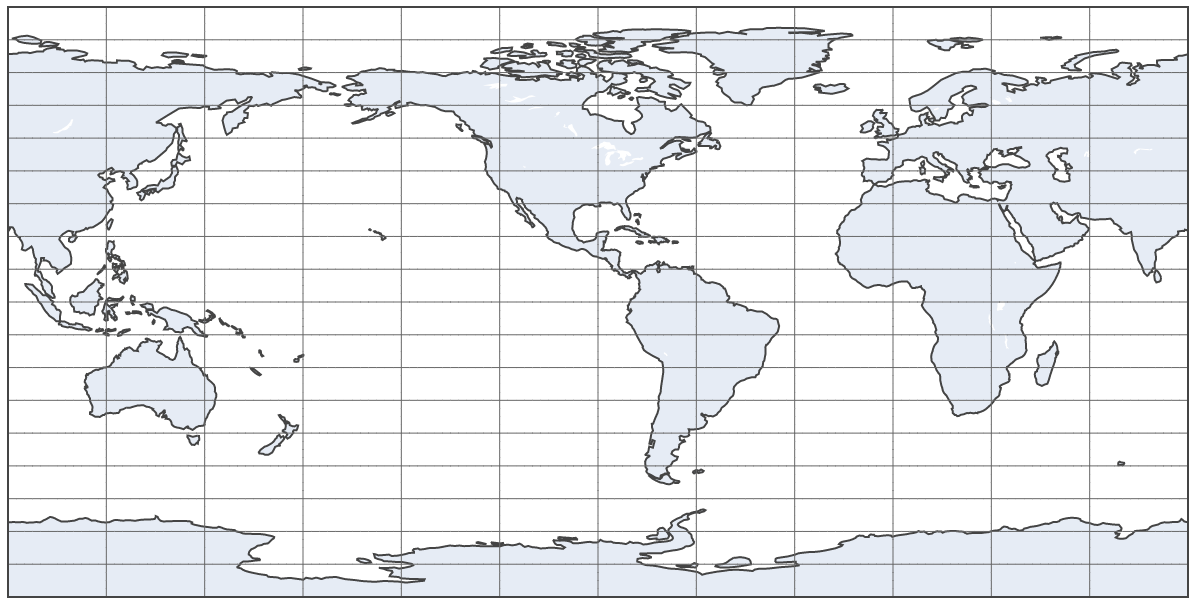}};
\node[rotate=90] at (-3.5,0) {$y$-axis (longitude)};
\node[] at (0,-6.5) {$x$-axis (latitude)};
\draw[->, very thick] (-3.2,-5.96) -- (3.4,-5.96);
\draw[->, very thick] (-2.95,-6.2) -- (-2.95,6.3);
\end{tikzpicture}

\caption{An Equirectangular projection map with $x$-axis defining lines of constant latitude and grid rows of the $y$-axis giving great-circle lines of constant longitude. Defining a constant $y$-value as lines of longitude ensures flat projections remain virtually identical in Spherical Anya as in Anya. Longitudinal values define \textit{rows} and pairs of latitudinal values, in combination with a row, define \textit{intervals}.}
\label{fig:equirectangular}
\end{figure*}
\begin{figure*}[h!]
\centering
\begin{subfigure}[t]{0.45\textwidth}
        \centering
        \begin{tikzpicture}[scale=1.4]
\draw[step=1cm,gray,very thin] (0,0) grid (4,3);
\filldraw[fill=black, draw=black] (0,1) rectangle (2,2);
\draw[red, very thick, dashed] (2.75,0) .. controls (1.6,1) and (1.6,2) .. (2.75,3);
\foreach \x in {0,1,2,3,4}
   \draw (\x cm,1pt) -- (\x cm,-1pt) node[anchor=north] {$\x$};
\foreach \y in {0,1,2,3}
    \draw (1pt,\y cm) -- (-1pt,\y cm) node[anchor=east] {$\y$};
\fill[red] (2, 2) circle[radius=1mm];
\fill[red] (2, 1) circle[radius=1mm];
\node[] at (2.2,1.2) {$a$};
\node[] at (2.2,2.2) {$b$};
\node[] at (0.2, -.5) {Southwards};
\node[] at (3.8, -.5) {Northwards};
\node[label={[align=left]Obstacle in\\ Southern\\ Hemisphere}, white] at (1.2, 2) {};
\draw[<->, very thick] (1, -.5) -- (3, -.5);
\end{tikzpicture}
 \caption{A great circle line between two points $a$ and $b$ of equal latitude in the southern hemisphere will always curve southwards, intersecting an obstacle with north-most edge upon this line. The symmetric claim holds likewise for a south-facing edge in the northern hemisphere.}\label{fig:great_circle_vs_vertical_barrier}
 \end{subfigure}%
 ~~~~~
 \begin{subfigure}[t]{0.45\textwidth}
        \centering
        \begin{tikzpicture}[scale=1.4]
\draw[step=1cm,gray,very thin] (0,0) grid (4,3);
\filldraw[fill=black, draw=black] (0,1) rectangle (2,2);
\draw[red, very thick, dashed] (1.25,0) .. controls (2.4,1) and (2.4,2) .. (1.25,3);
\foreach \x in {0,1,2,3,4}
   \draw (\x cm,1pt) -- (\x cm,-1pt) node[anchor=north] {$\x$};
\foreach \y in {0,1,2,3}
    \draw (1pt,\y cm) -- (-1pt,\y cm) node[anchor=east] {$\y$};
\fill[red] (2, 2) circle[radius=1mm];
\fill[red] (2, 1) circle[radius=1mm];
\node[] at (2.2,1.2) {$a$};
\node[] at (2.2,2.2) {$b$};
\node[] at (0.2, -.5) {Southwards};
\node[] at (3.8, -.5) {Northwards};
\node[label={[align=left]Obstacle in\\ Northern\\ Hemisphere}, white] at (1.2, 2) {};
\draw[<->, very thick] (1, -.5) -- (3, -.5);
\end{tikzpicture}
 \caption{A great circle line between two points $a$ and $b$ of equal latitude in the northern hemisphere will always curve northwards, avoiding an obstacle with north-most edge upon this line. The symmetric claim holds likewise for a south-facing edge in the southern hemisphere.
 }
 \end{subfigure}%
  \caption{Great-circle lines passing through two vertical corners of an obstacle will either intersect only with points internal to the bulk of the obstacle or only with points external to it, with the exception of the corner points $a$ and $b$ themselves.}\label{fig:great_circle_vs_vertical_barrier}
\label{fig:great_circle_at_edges}
\end{figure*}
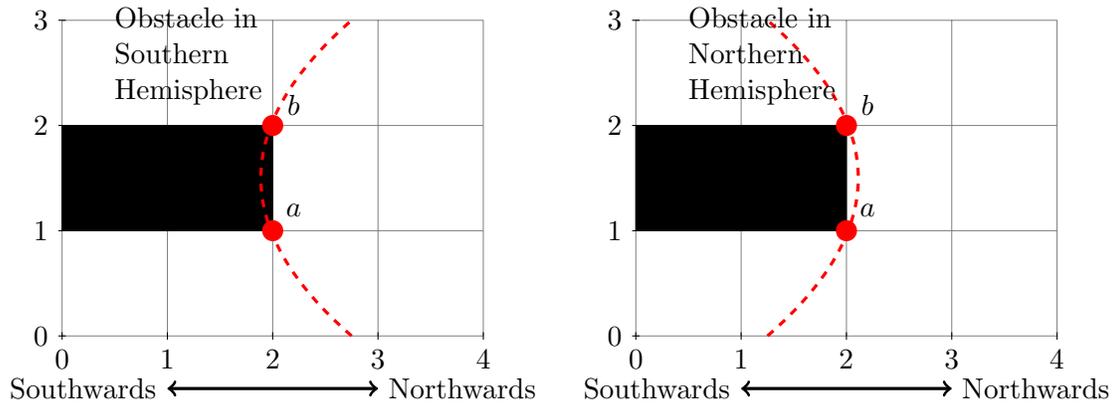

For Spherical Anya the grid remains a planar subdivision of $W \times H$ cells. Rows are lines of constant longitude and vertical edges are lines of constant latitude, as shown in Figure \ref{fig:equirectangular}. This means each cell in Spherical Anya is neither square in the Euclidean sense nor the Spherical sense; the vertical edges of the cell are not great-circle lines, albeit the horizontal edges of the cell \textit{are}, and therefore constitute the shortest path between points on the same horizontal line, or row. Rows thus defined, obstacle edges -- facing either the north or south poles of the sphere -- are not great-circle lines with the exception of edges on the equator. Hence a great-circle line passing through the vertically aligned corner points of an obstacle defined on the grid will in general either pass exclusively within the non-traversable bulk of the obstacle or exclusively external to it, with the exception of the corner points themselves, as illustrated in Figure \ref{fig:great_circle_at_edges}.

Of particular note is that the existence of a line of constant latitude between two points, in addition passing adjacent only to traversable cells, does not guarantee that these two points are \textit{visible}, as illustrated for example in Figure \ref{fig:great_circle_at_edges}(a). Instead the great-circle line, rather than a straight line in Euclidean space, underlies the definition of visibility in Spherical geometry.

\begin{defn}
	\textit{Two points are said to be visible if they can be connected by a great-circle line which does not pass through an obstacle or the intersection formed by two diagonally-adjacent obstacles.}
\end{defn}

\noindent By fixing rows as lines of constant longitude, the definition of a grid interval in Spherical Anya can remain virtually identical as for Anya.

\begin{defn}
	\textit{A grid interval $I$ is a set of contiguous and pairwise visible points drawn from any discrete row of constant longitude on the grid. Each interval is defined in terms of its endpoints $a$ and $b$. With the possible exception of $a$ and $b$, each interval contains only intermediate non-corner points.}
\end{defn}

\noindent The grid, intervals and visibility thus defined, the next building block of Anya is the \textit{search node}. For Spherical Anya it becomes necessary to define not one but two categories of search node, \textit{standard} and \textit{adjoint}, the curvature of Spherical geometry making the second category necessary to guarantee completeness of the search space and thereby optimality of the resulting path.

\begin{defn}
\textit{A standard search node $(I, r)$ is a tuple where $r \notin I$ is a point called the root and $I$ is an interval such that each point $p \in I$ is visible from r.}
\label{defn:standard_node}
\end{defn}

\noindent A standard node here is much the same as a node in Anya albeit with the altered definition of visibility, being defined by the existence of a traversable great-circle path between the root $r$ and all points in the interval $I$ as illustrated in Figure \ref{fig:cone_projection}, with $I := [a, b]$. The interval may be closed, as in this example, or alternatively open, or half-open. The kind of interval used will depend on later considerations (see Definitions \ref{defn:standard_successor} and \ref{defn:adjoint_successor}). The second category of search node, which exists only in Spherical Anya, arises by virtue of the vertical line of an obstacle edge generally \textit{not} being the shortest path between the two corner points on the vertical edge, as illustrated in Figure \ref{fig:intermediate_cone_projection}.

\begin{figure*}[h]
\centering
\begin{tikzpicture}[scale=1]
\draw[step=1cm,gray,very thin] (0,0) grid (4,4);
\draw[line width=1.5mm, red] (0,4) -- (4,4);
\draw[red, very thick, dashed] (1,0) .. controls (1,1) and (.5,3) .. (0,4);
\draw[red, very thick, dashed] (1,0) .. controls (2.4,.5) and (3.6, 3) .. (4,4);
\foreach \x in {0,1,2,3,4}
   \draw (\x cm,1pt) -- (\x cm,-1pt) node[anchor=north] {$\x$};
\foreach \y in {0,1,2,3,4}
    \draw (1pt,\y cm) -- (-1pt,\y cm) node[anchor=east] {$\y$};
\fill[red] (1, 0) circle[radius=1.3mm];
\node[] at (1.3,0.4) {$r$};
\node[] at (0,4.5) {$a$};
\node[] at (4,4.5) {$b$};
\end{tikzpicture}
\caption{The standard node in Spherical Anya $[a, b], r$ is an interval $I := [a, b]$ and a root point $r$ with $r \notin [a, b]$, where a traversible great-circle path exists between all points $p \in I$. Standard nodes form the first primary abstraction of the search space, all successors along a path from source to target destination consisting of a sequence of root points which are either root points of a standard node (definition \ref{defn:standard_node}) or an adjoint node (definition \ref{defn:intermediate_node}).}
\label{fig:cone_projection}
\end{figure*}
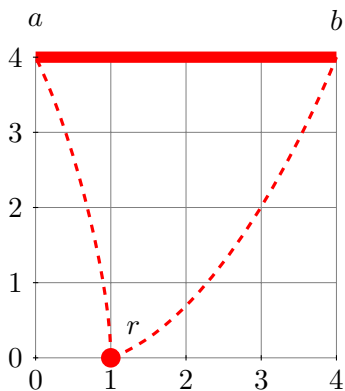\label{cone_projection}

\begin{defn}
\textit{An adjoint search node $(I, r)$ is a tuple where $r \notin I$ is a point called the root, which is located on the vertical edge of an obstacle, and $I$ is a traversable interval such that each point $p \in I$ is invisible from r and furthermore the great-circle line from $r$ to $p$ intersects the vertical edge twice, and only the vertical edge, including once at $r$.}
\label{defn:intermediate_node}
\end{defn}

\noindent This second kind of search node, illustrated in Figure \ref{fig:intermediate_cone_projection}, is unique to Spherical Anya and is not required in Euclidean space where vertical edges describe straight lines and the turning points of an optimal route will always be corner points. Neither is true in Spherical geometry (see Section \ref{section:optimality}), where by convention the rows of the world map grid have been fixed as great-circle lines of constant longitude and the vertical edges of obstacles are lines of constant latitude. 
 
\begin{figure*}[h]
\centering
\begin{tikzpicture}[scale=1.4]
\draw[step=1cm,gray,very thin] (0,0) grid (4,3);
\filldraw[fill=black, draw=black] (0,0) rectangle (2,3);
\draw[red, line width=1.5mm] (2,3) -- (4,3);
\draw[red, very thick, dashed] (2,0) .. controls (1.5,1) and (1.5,2) .. (2,3);
\draw[red, very thick, dashed] (2,0) .. controls (2,1.5) and (3,2.5) .. (4,3);
\foreach \x in {0,1,2,3,4}
   \draw (\x cm,1pt) -- (\x cm,-1pt) node[anchor=north] {$\x$};
\foreach \y in {0,1,2,3}
    \draw (1pt,\y cm) -- (-1pt,\y cm) node[anchor=east] {$\y$};
\fill[red] (2, 0) circle[radius=1mm];
\node[] at (2.2,0.2) {$r$};
\node[] at (2,3.3) {$a$};
\node[] at (4,3.3) {$b$};
\end{tikzpicture}
        \caption{An adjoint node exists only in Spherical Anya due to the relative curvature between the vertical constant-latitude edge of obstacles and a great-circle line drawn between the two vertically aligned corner points of such an edge. An adjoint node is a node $[a, b), r$ where all cells adjacent to $[a, b)$ are traversable and furthermore all great circle lines through the root $r$ and any point $p \in [a, b)$ intersect the vertical edge of the obstacle twice.  The opposite scenario exists in Spherical geometry too, where no great-circle line between the vertically aligned corners of an obstacle edge intersects the edge, excepting at the corner points.}
    \label{fig:intermediate_cone_projection}
\end{figure*}
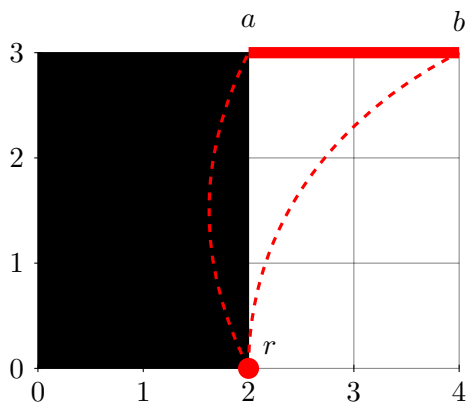

The existence of adjoint nodes in Spherical geometry marks the most significant departure between the underlying mathematical structure of Spherical Anya compared to Anya. Thus far the differences have been limited; the path of shortest distance merely being substituted for the great-circle path of Spherical geometry, the orientation of the grid fixed explicitly to ensure rows are shortest-path lines of longitude, and the definition of visibility similarly adapted to fit the new context. However, with the addition of adjoint nodes the divergence becomes more fundamental. This divergence is formalised in Section \ref{section:optimality}, after which it finally becomes practical to provide the concluding definitions required to implement Spherical Anya, those of \textit{successors} and \textit{projections}.

\section{Optimality}\label{section:optimality}

\begin{lemma}
	Not all turning points on an optimal path in spherical geometry are corner points.
\label{lemma:turning_points}
\end{lemma}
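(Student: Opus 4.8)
\emph{Proof plan.} The plan is to construct one explicit spherical world map, together with a source $S$ and target $T$, whose optimal path provably bends at a point lying strictly inside a constant-latitude obstacle edge. By the hemispheric symmetry recorded in Figure \ref{fig:great_circle_at_edges} I may work in the northern hemisphere. I would take an obstacle $O$ made up of all grid cells at latitude at least $\phi_0>0$ whose longitude lies in an interval $[\lambda_1,\lambda_2]$ of width less than $\pi$; then $O$ is the intersection of a spherical cap with a lune, hence geodesically convex, and its only edge facing the equator is the constant-latitude arc $e=\{\phi=\phi_0\}\cap\{\lambda_1\le\lambda\le\lambda_2\}$, with corner points $c_1$ and $c_2$. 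I would place $S$ and $T$ at a common latitude $\phi_1\in(0,\phi_0)$ and at longitudes $\mu_1<\mu_2$ with $\lambda_1<\mu_1<\mu_2<\lambda_2$, chosen so that the great-circle arc joining $S$ and $T$ attains a maximal latitude strictly above $\phi_0$. Such a choice exists because the maximal latitude of the geodesic between two points of latitude $\phi_1$ increases continuously with their longitude separation (from $\phi_1$ at separation $0$ to $\pi/2$ at separation $\pi$); taking the separation just past the threshold and then enlarging $[\lambda_1,\lambda_2]$, keeping its width below $\pi$, confines the sub-arc on which this geodesic exceeds latitude $\phi_0$ to a longitude range well inside $(\lambda_1,\lambda_2)$.

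\emph{Step 1: the optimal path must meet $e$.} An optimal traversable path $\pi$ from $S$ to $T$ exists, since the traversable region is compact and length is lower semicontinuous, and on every open sub-arc disjoint from $O$ it is a great-circle arc by local minimality. If $\pi$ stayed strictly equatorward of latitude $\phi_0$ it would never touch $O$, would therefore be locally geodesic throughout, and hence, being optimal, would be the geodesic from $S$ to $T$, contradicting that the latter rises above $\phi_0$. A short length comparison also rules out the route that winds the long way round the sphere in longitude, so $\pi$ remains within the longitude span of $e$; consequently $\pi$ meets $e$.

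\emph{Step 2: $\pi$ runs along $e$ on a positive-length arc whose endpoints are non-corner turning points.} Because $O$ is geodesically convex and obstructs the geodesic from $S$ to $T$, the behaviour of a taut string pulled around a convex body applies: wherever $\pi$ meets $\partial O$ it runs along $\partial O$, so near $e$ it contains an arc $\gamma$ of $e$ of positive length, joined tangentially to great-circle arcs at its two endpoints. The reason $\pi$ cannot instead merely graze $e$ and shortcut underneath it is exactly the phenomenon of Figure \ref{fig:great_circle_at_edges}: the geodesic between any two points of $e$ bulges poleward of $e$, i.e.\ into $O$, so no equatorward shortcut between two points of $e$ exists, and in particular $\pi\cap e$ is a single arc. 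Writing $r$ and $r'$ for the endpoints of $\gamma$, the construction forces the longitudes of $r$ and $r'$ into $(\mu_1,\mu_2)\subset(\lambda_1,\lambda_2)$, so $r,r'\in\mathrm{int}(e)$ and neither is a corner point. At $r$ and $r'$ the optimal path switches from a great-circle arc, which has zero geodesic curvature, to the constant-latitude arc $e$, whose geodesic curvature is nonzero because $\phi_0\neq 0$; hence $\pi$ fails to be locally a geodesic at $r$ or $r'$, which makes them turning points of $\pi$ --- precisely the configuration that Definition \ref{defn:intermediate_node} and Figure \ref{fig:intermediate_cone_projection} are built to handle. Since $r$ and $r'$ are turning points but not corner points, the lemma follows.

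\emph{Main obstacle.} The hinge of the argument is Step 2: certifying that the optimal path genuinely travels \emph{along} the constant-latitude edge on an arc of positive length, rather than grazing it at isolated points, and then pinning that arc strictly inside the edge. The first half is where the opposite geodesic curvature of constant-latitude circles and great circles --- the content of Figure \ref{fig:great_circle_at_edges} --- is indispensable; the remaining ingredients (existence and piecewise-geodesic structure of $\pi$, convexity of $O$, ruling out the long way round, and the elementary spherical trigonometry that locates $r$ and $r'$) are routine, and I would keep their verification brief.
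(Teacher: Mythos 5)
Your proposal is correct, but it reaches the lemma by a genuinely different route from the paper. The paper's proof is a one\-/sided, fully computational construction: it places the root $r$ at a corner of a constant-latitude edge, passes to the Gnomonic projection centred on the pole, extracts the departure longitude from the tangency condition $\mathbf{x}\cdot(\mathbf{p}-\mathbf{x})=0$ (equation (\ref{eq:departure_angle})), and then certifies optimality of ``follow the edge from $r$ to $x$, then the great circle to $p$'' by the sign of $-\partial\Delta\sigma/\partial\phi'$ together with the lower bound $R\cos\phi_1(\theta_1-\theta_0)$ on any traversable path between the two meridians. You instead use a symmetric two-sided configuration ($S$ and $T$ both equatorward of a cap-and-lune obstacle) and certify optimality qualitatively, via geodesic convexity, the taut-string principle, and the mismatch in geodesic curvature between great circles and non-equatorial parallels. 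Both arguments isolate the same phenomenon --- tangential departure from a constant-latitude edge at a point interior to the edge --- and the step you flag as the hinge (that the optimal path runs \emph{along} the edge on a positive-length arc rather than grazing it, because the geodesic chord between two edge points bulges into the obstacle) is exactly the step the paper discharges with its derivative computation and length bound; so your plan is not missing an idea, it is deferring the same quantitative verification. What the paper's route buys is the explicit closed form (\ref{eq:departure_angle}) for the departure point, which is reused downstream to construct adjoint successors and to evaluate $g$-values; what your route buys is a cleaner pure-existence argument and a transparent explanation of \emph{why} the phenomenon occurs. As a consistency check in your favour: the condition for your contact arc to have positive length, $\mu_2-\mu_1>2\cos^{-1}\!\left(\tan\phi_1/\tan\phi_0\right)$, coincides exactly with your hypothesis that the free geodesic from $S$ to $T$ rises above $\phi_0$, so admissible parameters exist and the departure points $r,r'$ do land strictly inside the edge as claimed.
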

\begin{proof}
	By construction. It is assumed the obstacle edge is located in the northern hemisphere and facing southwards as illustrated in Figure \ref{fig:ersatz_vs_optimal_distance_metric}. All results follow identically for the southern hemisphere and without loss of generality. For longitude, latitude $(\theta, \phi)$ and Gnomonic projection centred on the north pole $(0, \frac{\pi}{2})$ we have projected coordinates

\begin{align}\label{eq:gnomonic_mapping}
x' &= \frac{cos(\phi)sin(\theta)}{sin(\phi)}\nonumber\\
y' &= -\frac{cos(\phi)cos(\theta)}{sin(\phi)}
\end{align}

\noindent As the shortest path between any pair of points under the Gnomonic projection is a straight line we can define the last point on the obstacle which remains visible to $\mathbf{p}$ as $\mathbf{x}$, defined by coordinates $(\theta_1, \phi_1)$. Target point $\mathbf{p}$ takes coordinates $(\theta_2, \phi_2)$. The last point on the obstacle with direct line-of-sight to $\mathbf{p}$ must satisfy $\mathbf{x} \cdot (\mathbf{p}-\mathbf{x}) = 0$. Plugging into (\ref{eq:gnomonic_mapping}) yields $\mathbf{x}\cdot \mathbf{p} = \frac{cos(\phi_1)cos(\phi_2)}{sin(\phi_1)sin(\phi_2)} cos(\theta_2 - \theta_1)$ and $||\mathbf{x}||^2 = \frac{cos^2(\phi_1)}{sin^2(\phi_1)}$. Solving for $\theta_1$ gives

\begin{equation}\label{eq:departure_angle}
	\theta_1 = \theta_2 - cos^{-1} \left [ \frac{tan(\phi_2)}{tan(\phi_1)} \right ] 
\end{equation}

\noindent This defines the point of departure from the obstacle edge. Now we show that the result which follows from applying string-pulling holds, namely that the shortest path from the root point $\mathbf{r}$ to $\mathbf{p}$ consists of the line segment $L(\mathbf{r}, \mathbf{x})$ defining the boundary of the obstacle between $\mathbf{r}$ and $\mathbf{x}$, plus the great circle line $C(\mathbf{x}, \mathbf{p})$ between $\mathbf{x}$ and $\mathbf{p}$. The central angle between $\mathbf{p}$ and any point $\mathbf{x}'$ on the line $\theta=\theta_1$ is given by

\begin{equation}
	\Delta\sigma = cos^{-1}\left(sin(\phi')sin(\phi_2) + cos(\phi')cos(\phi_2)cos(\theta_2 - \theta_1)\right)
\end{equation}

\noindent From identity $\frac{\partial}{\partial x}cos^{-1}(x) = \frac{-1}{\sqrt{1-x^2}}$ we have that as $\phi'$ decreases away from $\phi_1$ (recall the equator is at $\phi=0$) the central angle $\Delta\sigma$ and therefore the great-circle distance changes by $-\frac{\partial}{\partial \phi'}$ which has the sign of $\mathrm{sgn}\left[cos(\phi')sin(\phi_2) - sin(\phi') cos(\phi_2)cos(\theta_2 - \theta_1)\right].$ Substituting in $\theta_1$ from (\ref{eq:departure_angle}) gives $\mathrm{sgn}\left[tan(\phi_1) - tan(\phi')\right]$ and since $\phi' < \phi_1$ this is always positive

\begin{equation}
\mathrm{sgn}\left[-\frac{\partial}{\partial \phi'}\Delta\sigma\right] = \mathrm{sgn}\left[tan(\phi_1) - tan(\phi')\right]  = 1.
\end{equation}

\noindent It follows that the length of this segment always increases with decreasing $\phi'$ and we can conclude that not only is $C(\mathbf{x}, \mathbf{p})$ the shortest traversable path between $\mathbf{x}$ and $\mathbf{p}$ but it is also the shortest traversable path between $\mathbf{p}$ and any point $\mathbf{x}'$ on $\theta=\theta_1$ conditioned on $\phi' < \phi_1$. Finally, since $R\cdot cos\phi_1$ is the lower bound on the radius for all paths between longitudes $\theta_0$ and $\theta_1$ in the northern hemisphere it follows that the shortest distance from $r$ to the line defined by $\theta = \theta_1$ is $R \cdot cos\phi_1 (\theta_1 - \theta_0)$, which is the path defining the obstacle boundary between those points. Hence the turning point for the optimal path between $\mathbf{p}$ and $\mathbf{r}$ is $\mathbf{x}$, as would have been concluded by string-pulling in the plane under a Gnomonic projection.
\end{proof}

\noindent For completeness, the practical consequence of Lemma \ref{lemma:turning_points} within the grid as defined for Spherical Anya is also provided explicitly as an additional Lemma.

\begin{lemma}
	All turning points on an optimal path in Spherical geometry are corner points or lie within the interval defined by two vertically aligned corner points of an obstacle.
\label{lemma:turning_points_lemma}
\end{lemma}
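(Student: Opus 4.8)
\noindent The plan is to exploit the fact that an optimal path is piecewise geodesic and to classify its breakpoints. An optimal path exists by a standard compactness argument, and away from its turning points it coincides locally with a great-circle arc, since otherwise replacing a short piece by the minor great-circle arc between its endpoints would shorten it; so the turning points are the finitely many breakpoints between successive great-circle segments. In the equirectangular grid every obstacle boundary is built from meridian (``horizontal'') edges, parallel (``vertical'') edges, and the corner points where these meet, so the claim reduces to showing that a turning point is always a corner point or an interior point of a vertical edge --- equivalently, that no turning point lies in the interior of a horizontal edge. Lemma~\ref{lemma:turning_points} already supplies turning points of the second kind.

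First I would pin every turning point $\mathbf{t}$ to an obstacle boundary. If some neighbourhood of $\mathbf{t}$ met no obstacle and no intersection of two diagonally adjacent obstacles --- which the definition of visibility forbids the path from crossing --- then choosing $\mathbf{a},\mathbf{b}$ on the path just before and just after $\mathbf{t}$, the minor great-circle arc $C(\mathbf{a},\mathbf{b})$ lies inside that neighbourhood for $\mathbf{a},\mathbf{b}$ close enough and is strictly shorter than the sub-path $\mathbf{a}\to\mathbf{t}\to\mathbf{b}$ by the strict spherical triangle inequality (the path genuinely bends at $\mathbf{t}$, hence $\mathbf{t}$ is not on the minor arc $C(\mathbf{a},\mathbf{b})$), contradicting optimality.

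The crux is then ruling out a turning point $\mathbf{t}$ in the interior of a horizontal edge $E$. Such an $E$ lies on a great circle $G$ with the obstacle bulk locally on one closed side of $G$, so the free region near $\mathbf{t}$ contains the open hemisphere $H$ bounded by $G$ on the other side. The geometric fact to exploit is that $\overline{H}=H\cup G$ is geodesically convex: the minor arc joining two points of $\overline{H}$ is a nonnegative combination of their unit vectors and hence stays in $\overline{H}$, lying in $H$ except possibly at an endpoint on $G$. Taking $\mathbf{a},\mathbf{b}$ on the path immediately either side of $\mathbf{t}$ (at least one of which is off $G$, since $\mathbf{t}$ is a genuine bend), every point of $C(\mathbf{a},\mathbf{b})$ other than an endpoint on $G$ lies in $H$, so for $\mathbf{a},\mathbf{b}$ close to $\mathbf{t}$ the shortcut $C(\mathbf{a},\mathbf{b})$ is free; since the path truly bends at $\mathbf{t}$ it is strictly shorter, a contradiction. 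Together with the previous step this forces every turning point to be a corner point or to lie strictly between two vertically aligned corner endpoints of an obstacle, which is the grid-level form of Lemma~\ref{lemma:turning_points}.

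I expect the main obstacle to be the subcase of the last step where the optimal path runs \emph{along} $E$ for a positive length before peeling off at an interior point of $E$, so that it meets the obstacle boundary along an interval rather than at an isolated point: one must then take $\mathbf{a}$ on $E$ and $\mathbf{b}$ just off it on the free side and verify that the arc from $\mathbf{a}$ to $\mathbf{b}$ departs $\mathbf{a}$ into $H$ and never re-enters the obstacle, which is again exactly the geodesic convexity of $\overline{H}$. That convexity is precisely what \emph{fails} for a vertical edge --- a parallel that curves toward the free region, so that peeling off in its interior is genuinely optimal, as Lemma~\ref{lemma:turning_points} shows. A tidy alternative for the whole argument is to localise each turning point inside a Gnomonic chart centred on the appropriate pole, as in the proof of Lemma~\ref{lemma:turning_points}: there horizontal edges become straight segments and vertical edges become circular arcs, and the statement reduces to the elementary fact that a taut string around a region bounded by line segments and circular arcs changes direction only at a segment endpoint or along a circular arc, leaving only the routine check that the hemisphere covered by the chart is large enough for the local analysis.
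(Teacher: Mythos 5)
Your argument is correct and follows the same route as the paper: you make exactly the paper's reduction (the claim is equivalent to showing no turning point lies in the interior of a horizontal obstacle edge) and rest it on exactly the paper's key fact, namely that horizontal edges are great-circle arcs while vertical edges are not. The paper disposes of this in two sentences (``True by construction; horizontal lines are great-circle lines''), so your hemisphere-convexity and string-pulling details simply make rigorous what the paper asserts by fiat.
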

\begin{proof}
	The statement is equivalent to the claim that no turning point lies within the horizontal interval of a barrier. True by construction; horizontal lines are great-circle lines.
\end{proof}

\begin{figure*}[h]
\begin{subfigure}[t]{0.45\textwidth}
\centering
\begin{tikzpicture}[scale=1.5]
\draw[step=1cm,gray,very thin] (0,0) grid (4,3);
\filldraw[fill=black, draw=black] (0,0) rectangle (2,3);

\draw[red, line width=1.5mm] (2,3) -- (4,3);
\draw[white, very thick, dashed] (2,0) .. controls (1.5,1) and (1.5,2) .. (2,3);
\draw[red, very thick, dashed] (2,0) .. controls (2,1.5) and (3,2.5) .. (4,3);
\draw[red, very thick, dashed] (2,1) .. controls (2,1.8) and (2.3,2.5) .. (3,3);

\draw (2 cm,1pt) -- (2 cm,-1pt) node[anchor=north] {$\phi_1$};
\draw (3 cm,1pt) -- (3 cm,-1pt) node[anchor=north] {$\phi_2$};
\draw (1pt,0 cm) -- (-1pt,0 cm) node[anchor=east] {$\theta_0$};
\draw (1pt,1 cm) -- (-1pt,1 cm) node[anchor=east] {$\theta_1$};
\draw (1pt,3 cm) -- (-1pt,3 cm) node[anchor=east] {$\theta_2$};

\fill[white] (2, 0) circle[radius=1mm];
\fill[red] (2, 0) circle[radius=.8mm];
\fill[white] (2, 1) circle[radius=.8mm];
\fill[red] (2, 1) circle[radius=.6mm];
\fill[white] (3, 3) circle[radius=1.3mm];
\fill[red] (3, 3) circle[radius=1mm];
\node[] at (2.2,0.2) {$r$};
\node[] at (2,3.3) {$a$};
\node[] at (4,3.3) {$b$};
\node[white] at (1.8,1) {$x$};
\node[] at (3,3.28) {$p$};
\end{tikzpicture}
\caption{Adjoint node $[a, b), r$ is constructed such that $[a, b)$ is traversable and for any $p \in [a, b)$ the optimal route between $r$ and $p$ will consist of a constant-latitude segment $L(r, x)$ and a great-circle segment $C(x, p)$. Hence $x$ is the turning point of the optimal path between the root $r$ and $p$, which is the basis of Lemma's \ref{lemma:turning_points} and \ref{lemma:turning_points_lemma}.}
\end{subfigure}%
~~~~~~
\begin{subfigure}[t]{0.45\textwidth}
        \centering
\begin{tikzpicture}[scale=1.5]

\draw[black, line width=15mm] (3,0) arc (0:42.7:2.85);
\draw[gray, very thick, dotted] (3.5,0) arc (0:60:3.3);
\draw[gray, very thick, dotted] (4,0) arc (0:55:3.75);
\draw[red, very thick, dashed] (3,2.6) -- (3.39,.88);
\draw[red, very thick, dashed] (3.5,0) -- (3.5,3);
\draw[gray, thick, dotted] (0,0) -- (4,3.4);
\draw[gray, thick, dotted] (0,0) -- (4.4,0);
\draw[gray, thick, dotted] (0,0) -- (4.3,1.15);
\fill[white] (3.5, 0) circle[radius=1.1mm];
\fill[red] (3.5, 0) circle[radius=.8mm];
\draw[red, line width=1.5mm] (2.62,2.22) -- (3.5,3);

\fill[gray] (0,0) circle[radius=.5mm];
\fill[white] (3, 2.55) circle[radius=1.1mm];
\fill[red] (3, 2.55) circle[radius=.8mm];
\fill[white] (3.39,.88) circle[radius=.9mm];
\fill[red] (3.39,.88) circle[radius=.65mm];
\node[] at (3.65,.2) {$r$};
\node[] at (2.5,2.4) {$a$};
\node[] at (3.5,3.2) {$b$};
\node[white] at (3.14,.88) {$x$};
\node[] at (2.9,2.75) {$p$};
\node[] at (2,3) {$\phi_1$};
\node[] at (2.4,3.2) {$\phi_2$};
\node[] at (4.3,1.2) {$\theta_1$};
\node[] at (4.5,0) {$\theta_0$};
\node[] at (4,3.5) {$\theta_2$};
\node[] at (.1,-.2) {$(0, \pi/2)$};
\node[] at (3, -.3) {decreasing $\phi$};
\draw[->, gray, very thick] (3.8, -.3) -- (4.4, -.3);

\end{tikzpicture}
\caption{Gnomonic projection of the obstacle of Figure \ref{fig:ersatz_vs_optimal_distance_metric}(a). Great-circles are straight lines and without loss of generality the centre is taken as the north pole $(0, \pi/2)$. Circles centred on the origin are lines of constant latitude, with $\phi$ decreasing to zero at the equator as the radius increases.}
\end{subfigure} 
\caption{}
\label{fig:ersatz_vs_optimal_distance_metric}
\end{figure*}

\noindent Lemma's \ref{lemma:turning_points} and \ref{lemma:turning_points_lemma} provide the foundation on which to present the final components required to implement the Spherical Anya algorithm, that of a \textit{successor}, which is again adapted as minimally as possible from that of a successor in Anya, and three categories of projection; \textit{flat projections}, \textit{standard cone projections} and, uniquely to Spherical Anya, \textit{adjoint cone projections}.

Whereas the optimal path from source to target in Anya is a sequence of pairwise visible root points, in Spherical Anya the optimal path is a sequence of \textit{turning points} which, with the exception of turning points lying on the vertical edge of an obstacle, are also pairwise-visible root points, by Lemma's \ref{lemma:turning_points} and \ref{lemma:turning_points_lemma}. In spite of this and the significant adaptations made, the algorithm for calculating this sequence of optimal turning points is analogous to Anya, where successors are constructed at each node, the mechanism by which this occurs underlying the guarantee of completeness, optimality and termination identically in Spherical Anya as in Anya.

\section{Successors and Projections}
Successors in Spherical Anya are separated into two categories, each according to the kind of Anya node associated with it; standard or adjoint.

\begin{defn}
\textit{A standard successor of a search node $(I,r)$ is a search node $(I',r')$ such that
\begin{enumerate}
	\item $(I',r')$ is a standard node.
	\item For all points $p' \in I'$ there exists a point $p \in I$ such that the great-circle segments $C(r, p, p')$ form a locally optimal route from $r$ to $p'$.
	\item $r'$ is the unique point closest to $p'$ shared by all great-circle segments $C(r, p, p')$.
	\item $I'$ is maximal given the points above and the definition of a standard node.
\end{enumerate}
}
\label{defn:standard_successor}
\end{defn}
\noindent This is the analog of a successor as defined in Anya. Due to the existence of adjoint nodes in Spherical geometry, it is also necessary to define an \textit{adjoint successor}. For all intents and purposes these are to be treated as a standard successor in the implementation of the Spherical Anya algorithm; the fact that no point in the node interval is visible from the root point is relevant only insofar as local optimality of the path from adjoint node root $r$ through $x$ to any point in $I$ is in those cases accorded via Lemma 1, as is the corresponding addition to the route $g-$value. Hence

\begin{align}
f\mathrm{-value} &= g\mathrm{-value} + \mathrm{Anya~heuristic	}\nonumber\\
&= |C(s, r_1, ..., r_{current})| + |C(r_{current}, t)|
\end{align}

\noindent is throughout given by the sum of the \textit{shortest-path} line segments formed by the sequence of \textit{root} points $r_1, ... , r_{current}$ within each node from the source $s$, plus the Anya heuristic, being the great-circle distance between the current root and the target $t$. The shortest-path line segments $C$ are great-circles in all cases except for adjoint nodes, where the path and therefore the corresponding addition to the $g-$value is described by Lemma 1.

\begin{defn}
\textit{An adjoint successor of a search node $(I,r) = ((a, b), r)$ is a search node $(I',r') = ((a', b'), r')$ such that
\begin{enumerate}
	\item $(I', r')$ is an adjoint node.
	\item $I'$ is maximal given the definition of an adjoint node.
	\item Either $r' = a$ or $r' = b$.
\end{enumerate}
}
\label{defn:adjoint_successor}
\end{defn}

The mechanism by which successors are generated is via \textit{projections}, analogously to Anya, these being {(i)} observable (visible) cone projections, {(ii)} non-observable (invisible) standard cone projections, {(iii)} non-observable adjoint cone projections or {(iv)} flat projections. Some of these projections may themselves be broken into subcategories dependent on context. It is useful to give an overview of these in Anya, as the terminology and concepts are foundational to both algorithms.

\subsection{Projections in Anya}
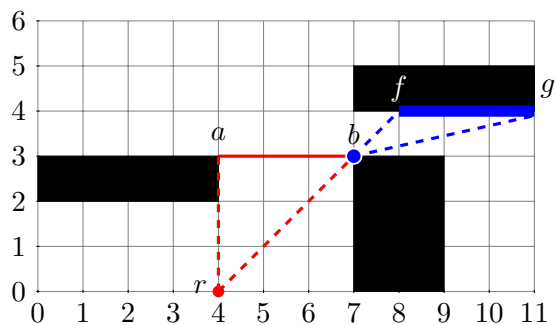
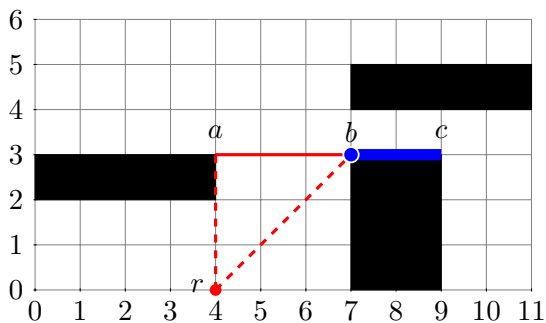
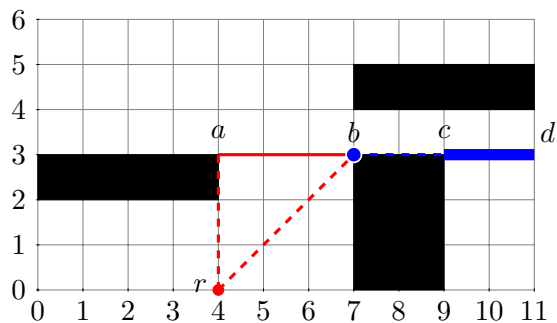
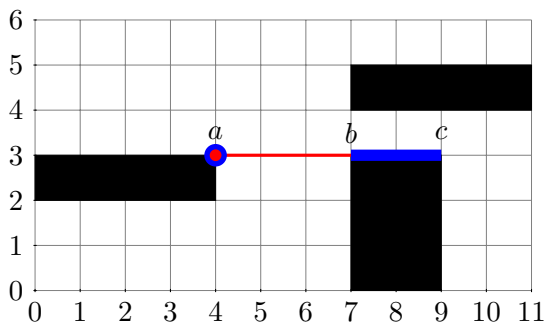
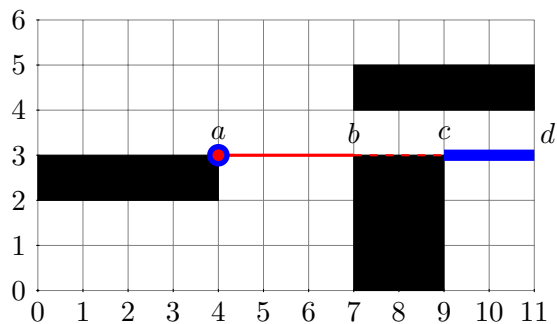
\begin{figure*}[t!]
    \centering
    \begin{subfigure}[t]{0.5\textwidth}
        \centering
\begin{tikzpicture}[scale=.6]
\draw[step=1cm,gray,very thin] (0,0) grid (11,6);
\filldraw[fill=black, draw=black] (0,2) rectangle (4,3);
\filldraw[fill=black, draw=black] (7,3) rectangle (9,0);
\filldraw[fill=black, draw=black] (7,5) rectangle (11,4);
\draw[red, very thick] (4,3) -- (7,3);
\draw[red, very thick, dashed] (4,0) -- (4,4);
\draw[red, very thick, dashed] (4,0) -- (8,4);
\draw[line width=1.5mm, blue] (4,4) -- (8,4);
\fill[blue] (4, 0) circle[radius=2.5mm];
\foreach \x in {0,1,2,3,4,5,6,7,8,9,10,11}
   \draw (\x cm,1pt) -- (\x cm,-1pt) node[anchor=north] {$\x$};
\foreach \y in {0,1,2,3,4,5,6}
    \draw (1pt,\y cm) -- (-1pt,\y cm) node[anchor=east] {$\y$};
\fill[red] (4, 0) circle[radius=1.3mm];
\node[] at (3.6,0.1) {$r$};
\node[] at (4,3.5) {$a$};
\node[] at (7,3.5) {$b$};
\node[] at (4,4.5) {$e$};
\node[white] at (8,4.5) {$f$};
\end{tikzpicture}
        \caption{An observable cone projection. In this instance the visibility cone is projected up one line to yield successor node $[e, f], r$.}
    \end{subfigure}%
    ~~~~~~~~ 
    \begin{subfigure}[t]{0.5\textwidth}
        \centering
\begin{tikzpicture}[scale=.6]
\draw[step=1cm,gray,very thin] (0,0) grid (11,6);
\filldraw[fill=black, draw=black] (0,2) rectangle (4,3);
\filldraw[fill=black, draw=black] (7,3) rectangle (9,0);
\filldraw[fill=black, draw=black] (7,5) rectangle (11,4);
\draw[red, very thick] (4,3) -- (7,3);
\draw[red, very thick, dashed] (4,0) -- (4,3);
\draw[red, very thick, dashed] (4,0) -- (7,3);
\draw[blue, very thick, dashed] (7,3) -- (8,4);
\draw[blue, very thick, dashed] (7,3) -- (11,3.9);
\draw[line width=1.5mm, blue] (8,4) -- (11,4);
\fill[white] (7, 3) circle[radius=1.9mm];
\fill[blue] (7, 3) circle[radius=1.5mm];
\foreach \x in {0,1,2,3,4,5,6,7,8,9,10,11}
   \draw (\x cm,1pt) -- (\x cm,-1pt) node[anchor=north] {$\x$};
\foreach \y in {0,1,2,3,4,5,6}
    \draw (1pt,\y cm) -- (-1pt,\y cm) node[anchor=east] {$\y$};

\fill[red] (4, 0) circle[radius=1.3mm];
\node[] at (3.6,0.1) {$r$};
\node[] at (4,3.5) {$a$};
\node[] at (7,3.5) {$b$};
\node[] at (11.3,4.5) {$g$};
\node[white] at (8,4.5) {$f$};
\end{tikzpicture}
        \caption{A non-observable cone projection. In this case, $b$ being a successor root node, the cone projection of $[a, b], r$ is found to be $(f, g], b$. No point within $(f, g]$ is visible from $r$.}
    \end{subfigure}\newline\newline
    \begin{subfigure}[t]{0.5\textwidth}
        \centering
\begin{tikzpicture}[scale=.6]
\draw[step=1cm,gray,very thin] (0,0) grid (11,6);
\filldraw[fill=black, draw=black] (0,2) rectangle (4,3);
\filldraw[fill=black, draw=black] (7,3) rectangle (9,0);
\filldraw[fill=black, draw=black] (7,5) rectangle (11,4);
\draw[red, very thick] (4,3) -- (7,3);
\draw[red, very thick, dashed] (4,0) -- (4,3);
\draw[red, very thick, dashed] (4,0) -- (7,3);
\draw[line width=1.5mm, blue] (7,3) -- (9,3);
\fill[white] (7, 3) circle[radius=1.9mm];
\fill[blue] (7, 3) circle[radius=1.5mm];
\foreach \x in {0,1,2,3,4,5,6,7,8,9,10,11}
   \draw (\x cm,1pt) -- (\x cm,-1pt) node[anchor=north] {$\x$};
\foreach \y in {0,1,2,3,4,5,6}
    \draw (1pt,\y cm) -- (-1pt,\y cm) node[anchor=east] {$\y$};
\fill[red] (4, 0) circle[radius=1.3mm];
\node[] at (3.6,0.1) {$r$};
\node[] at (4,3.5) {$a$};
\node[] at (7,3.5) {$b$};
\node[] at (9,3.5) {$c$};
\end{tikzpicture}
        \caption{An intermediate non-observable cone projection. The successor node $(b, c], b$ to $[a,b], r$ is said to be intermediate because it itself has only one successor, namely $(c, d], c$ in Figure \ref{fig:anya_projections}(d).}
    \end{subfigure}%
    ~~~~~~~~ 
    \begin{subfigure}[t]{0.5\textwidth}
        \centering
\begin{tikzpicture}[scale=.6]
\draw[step=1cm,gray,very thin] (0,0) grid (11,6);
\filldraw[fill=black, draw=black] (0,2) rectangle (4,3);
\filldraw[fill=black, draw=black] (7,3) rectangle (9,0);
\filldraw[fill=black, draw=black] (7,5) rectangle (11,4);
\draw[red, very thick] (4,3) -- (7,3);
\draw[red, very thick, dashed] (4,0) -- (4,3);
\draw[red, very thick, dashed] (4,0) -- (7,3);
\draw[blue, very thick, dashed] (7,3) -- (9,3);
\draw[line width=1.5mm, blue] (9,3) -- (11,3);
\fill[white] (7, 3) circle[radius=1.9mm];
\fill[blue] (7, 3) circle[radius=1.5mm];
\foreach \x in {0,1,2,3,4,5,6,7,8,9,10,11}
   \draw (\x cm,1pt) -- (\x cm,-1pt) node[anchor=north] {$\x$};
\foreach \y in {0,1,2,3,4,5,6}
    \draw (1pt,\y cm) -- (-1pt,\y cm) node[anchor=east] {$\y$};
\fill[red] (4, 0) circle[radius=1.3mm];
\node[] at (3.6,0.1) {$r$};
\node[] at (4,3.5) {$a$};
\node[] at (7,3.5) {$b$};
\node[] at (9,3.5) {$c$};
\node[] at (11.3,3.5) {$d$};
\end{tikzpicture}
        \caption{A non-observable cone projection. In this illustration the non-observable successor $(c, d], b$ is non-intermediate because it may have multiple successors, including successors generated via cone projections.}
    \end{subfigure}
    \begin{subfigure}[t]{0.5\textwidth}
        \centering
\begin{tikzpicture}[scale=.6]
\draw[step=1cm,gray,very thin] (0,0) grid (11,6);
\filldraw[fill=black, draw=black] (0,2) rectangle (4,3);
\filldraw[fill=black, draw=black] (7,3) rectangle (9,0);
\filldraw[fill=black, draw=black] (7,5) rectangle (11,4);
\draw[red, very thick] (4,3) -- (7,3);
\draw[line width=1.5mm, blue] (7,3) -- (9,3);
\foreach \x in {0,1,2,3,4,5,6,7,8,9,10,11}
   \draw (\x cm,1pt) -- (\x cm,-1pt) node[anchor=north] {$\x$};
\foreach \y in {0,1,2,3,4,5,6}
    \draw (1pt,\y cm) -- (-1pt,\y cm) node[anchor=east] {$\y$};
\fill[blue] (4, 3) circle[radius=2.5mm];
\fill[red] (4, 3) circle[radius=1.3mm];
\node[] at (4,3.5) {$a$};
\node[] at (7,3.5) {$b$};
\node[] at (9,3.5) {$c$};
\end{tikzpicture}
        \caption{An intermediate flat projection. Here the search node $(a, b], a$ has been projected horizontally along the same row, generating successor $(b, c], a$, an intermediate successor as per Figure \ref{fig:anya_projections}(c).}
    \end{subfigure}%
    ~~~~~~~~ 
    \begin{subfigure}[t]{0.5\textwidth}
        \centering
\begin{tikzpicture}[scale=.6]
\draw[step=1cm,gray,very thin] (0,0) grid (11,6);
\filldraw[fill=black, draw=black] (0,2) rectangle (4,3);
\filldraw[fill=black, draw=black] (7,3) rectangle (9,0);
\filldraw[fill=black, draw=black] (7,5) rectangle (11,4);
\draw[red, very thick] (4,3) -- (7,3);
\draw[red, thick, dashed] (7,3) -- (9,3);
\draw[line width=1.5mm, blue] (9,3) -- (11,3);

\foreach \x in {0,1,2,3,4,5,6,7,8,9,10,11}
   \draw (\x cm,1pt) -- (\x cm,-1pt) node[anchor=north] {$\x$};
\foreach \y in {0,1,2,3,4,5,6}
    \draw (1pt,\y cm) -- (-1pt,\y cm) node[anchor=east] {$\y$};
\fill[blue] (4, 3) circle[radius=2.5mm];
\fill[red] (4, 3) circle[radius=1.3mm];
\node[] at (4,3.5) {$a$};
\node[] at (7,3.5) {$b$};
\node[] at (9,3.5) {$c$};
\node[] at (11.3,3.5) {$d$};
\end{tikzpicture}
        \caption{A non-intermediate flat projection. Here the search node $(a, b], a$ has been projected horizontally along the same row, generating successor $(c, d], a$. This is a non-intermediate successor as per Figure \ref{fig:anya_projections}(d).}
    \end{subfigure}    
    \caption{Anya projections on a Euclidean world map}
    \label{fig:anya_projections}
\end{figure*}

\begin{figure*}[t!]
    \centering
    \begin{subfigure}[t]{0.5\textwidth}
        \centering
\begin{tikzpicture}[scale=.6]
\draw[step=1cm,gray,very thin] (0,0) grid (11,6);
\filldraw[fill=black, draw=black] (0,2) rectangle (4,3);
\filldraw[fill=black, draw=black] (7,3) rectangle (9,0);
\filldraw[fill=black, draw=black] (7,5) rectangle (11,4);
\draw[red, very thick] (4,3) -- (7,3);
\draw[red, very thick, dashed] (4,0) .. controls (4.5,2) and (4.1,3) .. (3.7,4);
\draw[red, very thick, dashed] (4,0) .. controls (5,.5) and (7,2.6) .. (7.4,4);
\draw[line width=1.5mm, blue] (3.7,4) -- (7.4,4);
\fill[blue] (4, 0) circle[radius=2.5mm];
\foreach \x in {0,1,2,3,4,5,6,7,8,9,10,11}
   \draw (\x cm,1pt) -- (\x cm,-1pt) node[anchor=north] {$\x$};
\foreach \y in {0,1,2,3,4,5,6}
    \draw (1pt,\y cm) -- (-1pt,\y cm) node[anchor=east] {$\y$};
\fill[red] (4, 0) circle[radius=1.3mm];
\node[] at (3.6,0.1) {$r$};
\node[] at (4,3.5) {$a$};
\node[] at (7,3.5) {$b$};
\node[] at (4,4.5) {$e$};
\node[white] at (7.5,4.5) {$f$};
\end{tikzpicture}
        \caption{An observable cone projection}
    \end{subfigure}%
    ~~~~~~~~ 
    \begin{subfigure}[t]{0.5\textwidth}
        \centering
\begin{tikzpicture}[scale=.6]
\draw[step=1cm,gray,very thin] (0,0) grid (11,6);
\filldraw[fill=black, draw=black] (0,2) rectangle (4,3);
\filldraw[fill=black, draw=black] (7,3) rectangle (9,0);
\filldraw[fill=black, draw=black] (7,5) rectangle (11,4);
\draw[red, very thick] (4,3) -- (7,3);
\draw[red, very thick, dashed] (4,0) .. controls (4.5,2) and (4.1,3) .. (4, 3);
\draw[red, very thick, dashed] (4,0) .. controls (5,.5) and (7,2.6) .. (7,3);
\draw[blue, very thick, dashed] (7,3) .. controls (7.1,3.3) and (7.1,3.3) .. (7.4,4);
\draw[blue, very thick, dashed] (7,3) -- (11,3.9);
\draw[line width=1.5mm, blue] (7.4,4) -- (11,4);
\fill[white] (7, 3) circle[radius=1.9mm];
\fill[blue] (7, 3) circle[radius=1.5mm];
\foreach \x in {0,1,2,3,4,5,6,7,8,9,10,11}
   \draw (\x cm,1pt) -- (\x cm,-1pt) node[anchor=north] {$\x$};
\foreach \y in {0,1,2,3,4,5,6}
    \draw (1pt,\y cm) -- (-1pt,\y cm) node[anchor=east] {$\y$};
\fill[red] (4, 0) circle[radius=1.3mm];
\node[] at (3.6,0.1) {$r$};
\node[] at (4,3.5) {$a$};
\node[] at (7,3.5) {$b$};
\node[] at (11.3,4.5) {$g$};
\node[white] at (7.5,4.5) {$f$};
\end{tikzpicture}
        \caption{A non-observable cone projection}
    \end{subfigure}
    \begin{subfigure}[t]{0.5\textwidth}
        \centering
\begin{tikzpicture}[scale=.6]
\draw[step=1cm,gray,very thin] (0,0) grid (11,6);
\filldraw[fill=black, draw=black] (0,2) rectangle (4,3);
\filldraw[fill=black, draw=black] (7,3) rectangle (9,0);
\filldraw[fill=black, draw=black] (7,5) rectangle (11,4);
\draw[red, very thick] (4,3) -- (7,3);
\draw[red, very thick, dashed] (4,0) .. controls (4.5,2) and (4.1,3) .. (4,3);
\draw[red, very thick, dashed] (4,0) .. controls (5,.5) and (7,2.6) .. (7,3);
\draw[line width=1.5mm, blue] (7,3) -- (9,3);
\fill[white] (7, 3) circle[radius=1.9mm];
\fill[blue] (7, 3) circle[radius=1.5mm];
\foreach \x in {0,1,2,3,4,5,6,7,8,9,10,11}
   \draw (\x cm,1pt) -- (\x cm,-1pt) node[anchor=north] {$\x$};
\foreach \y in {0,1,2,3,4,5,6}
    \draw (1pt,\y cm) -- (-1pt,\y cm) node[anchor=east] {$\y$};
\fill[red] (4, 0) circle[radius=1.3mm];
\node[] at (3.6,0.1) {$r$};
\node[] at (4,3.5) {$a$};
\node[] at (7,3.5) {$b$};
\node[] at (9,3.5) {$c$};
\end{tikzpicture}
        \caption{An intermediate non-observable cone projection}
    \end{subfigure}%
    ~~~~~~~~ 
    \begin{subfigure}[t]{0.5\textwidth}
        \centering
\begin{tikzpicture}[scale=.6]
\draw[step=1cm,gray,very thin] (0,0) grid (11,6);
\filldraw[fill=black, draw=black] (0,2) rectangle (4,3);
\filldraw[fill=black, draw=black] (7,3) rectangle (9,0);
\filldraw[fill=black, draw=black] (7,5) rectangle (11,4);
\draw[red, very thick] (4,3) -- (7,3);
\draw[blue, very thick, dashed] (7,3) -- (9,3);
\draw[red, very thick, dashed] (4,0) .. controls (4.5,2) and (4.1,3) .. (4,3);
\draw[red, very thick, dashed] (4,0) .. controls (5,.5) and (7,2.6) .. (7,3);
\draw[line width=1.5mm, blue] (9,3) -- (11,3);
\fill[white] (7, 3) circle[radius=1.9mm];
\fill[blue] (7, 3) circle[radius=1.5mm];
\foreach \x in {0,1,2,3,4,5,6,7,8,9,10,11}
   \draw (\x cm,1pt) -- (\x cm,-1pt) node[anchor=north] {$\x$};
\foreach \y in {0,1,2,3,4,5,6}
    \draw (1pt,\y cm) -- (-1pt,\y cm) node[anchor=east] {$\y$};
\fill[red] (4, 0) circle[radius=1.3mm];
\node[] at (3.6,0.1) {$r$};
\node[] at (4,3.5) {$a$};
\node[] at (7,3.5) {$b$};
\node[] at (9,3.5) {$c$};
\node[] at (11.3,3.5) {$d$};
\end{tikzpicture}
\caption{A non-intermediate non-observable cone projection}
\end{subfigure}
    \begin{subfigure}[t]{0.5\textwidth}
        \centering
\begin{tikzpicture}[scale=1.8]
\draw[step=1cm,gray,very thin] (0,0) grid (5,3);
\filldraw[fill=black, draw=black] (0,2) rectangle (2,3);
\filldraw[fill=black, draw=black] (4,3) rectangle (5,0);
\draw[red, very thick] (2,2) -- (4,2);
\draw[red, very thick, dashed] (2,0) .. controls (1.75,1) and (1.75,1) .. (2,2);
\draw[red, very thick, dashed] (2,0) .. controls (2.2,1) and (3.5,1.8) .. (4.,2);
\draw[blue, very thick, dashed] (2,2) .. controls (1.85,2.5) and (1.85,2.5) .. (2,3);
\draw[blue, very thick, dashed] (2,2) .. controls (2,2.2) and (2,2.7) .. (2.43,3);
\draw[line width=1.3mm, blue] (2,3) -- (2.4,3);
\fill[white] (2, 2) circle[radius=1.1mm];
\fill[white] (2, 0) circle[radius=1.1mm];
\fill[blue] (2, 2) circle[radius=.8mm];
\foreach \x in {0,1,2,3,4,5}
   \draw (\x cm,1pt) -- (\x cm,-1pt) node[anchor=north] {$\x$};
\foreach \y in {0,1,2,3}
    \draw (1pt,\y cm) -- (-1pt,\y cm) node[anchor=east] {$\y$};
\fill[red] (2, 0) circle[radius=.8mm];
\node[] at (1.8,0.1) {$r$};
\node[white] at (1.75,2.2) {$a$};
\node[] at (3.9,2.2) {$b$};
\node[] at (2,3.17) {$c$};
\node[] at (2.4,3.2) {$d$};
\end{tikzpicture}
\caption{A non-observable adjoint cone projection. Here the adjoint successor of the search node $[a, b], r$ is the adjoint node $[c, d), a$, whereby all points in [c, d) are invisible from $a$ and the interval endpoint $d$ is determined such that $d$ is visible from $a$.}
\end{subfigure}
    \caption{Spherical Anya projections on a world map in Spherical geometry.}
    \label{fig:spherical_anya_projections}
\end{figure*}
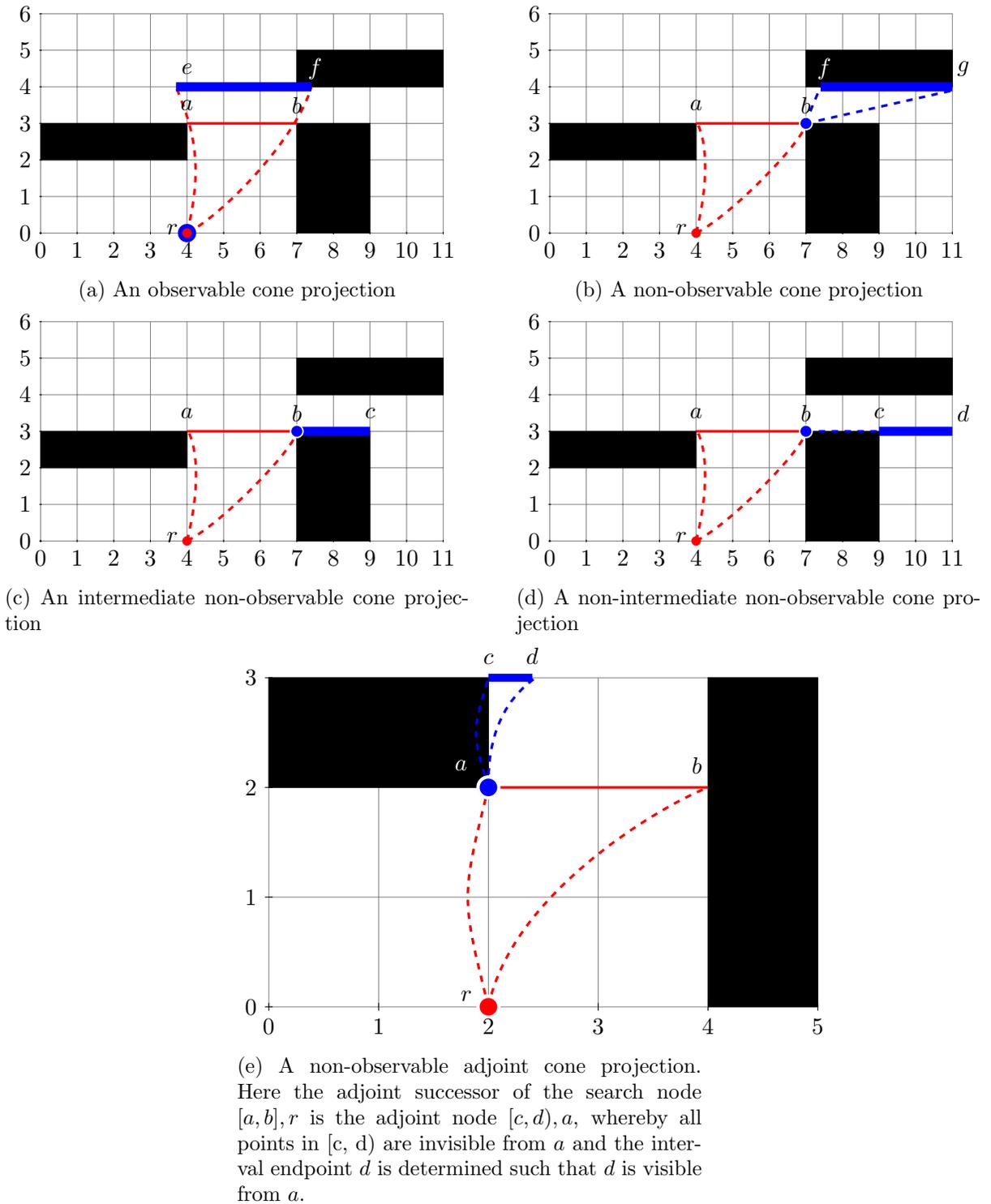

An \textit{observable cone projection} is illustrated in Figure {\ref{fig:anya_projections}}(a). This is the line-of-sight straight line projection in Euclidean space which extends the visibility cone described by the search node $[a, b], r$ from an initial row to the row above, resulting in successor node $[e, f], r$. Figure {\ref{fig:anya_projections}}(b) provides an example of a \textit{non-observable cone projection} which results in successor $(f, g], b$. Here the Euclidean straight line segment $L(b, f)$ defines an extremal line of the visibility cone for the successor of Figure {\ref{fig:anya_projections}}(a) and the interval endpoint $g$ is defined such that the length of the interval $(f, g]$ is maximal, as per the definition of a successor node. The final manifestations of cone projections are those as illustrated in Figures {\ref{fig:anya_projections}} (c) and (d), the former being an \textit{intermediate non-observable cone projection} and the latter a (non-intermediate) \textit{non-observable cone projection}.

\textit{Flat projections}, those extended from flat nodes whereby the root point lies on the same row as the interval, are typically sub-divided into two types, being \textit{intermediate} flat projections and \textit{non-intermediate} flat projections, these being referred to simply as \textit{flat projections}; see Figures {\ref{fig:anya_projections}} (e) and (f) for illustrations. An intermediate flat projection gives rise to the successor search node $(b, c], a$ as in Figure {\ref{fig:anya_projections}}(c). Due to this interval being adjacent to a non-traversable obstacle, it is known in advance that this search node will either be completely traversed or ignored, as any turning point along an optimal route cannot lie within the interval $(b, c)$. Hence one arrives at Figure {\ref{fig:anya_projections}}(d), illustrating the converse which is a non-intermediate flat projection. Here, the successor node $(c, d], a$ may not merely be traversed but instead may be the site of a turning point, leading to successive cone projections. Note that an interval may be a member of multiple successor nodes, as is the case for intervals $(b, c]$ and $(c, d]$.

\subsection{Projections in Spherical Anya}
Projections in Spherical Anya are analogous to those in Anya but for the fact that line-of-sight checking considers the great-circle line passing through point pairs. This differentiation is further exposed by the appearance of adjoint successors in Spherical Anya.

Flat projections remain the same, visually, due to the convention taking rows as lines of longitude. Where the extremal great-circle line segments $C(r, a)$ and $C(r, b)$ of a node $[a, b], r$ do not intersect an obstacle, a great-circle observable cone projection yields the standard successor $[e, f], r$ much as per Anya, save for the visibility cone defining the maximal interval $I=[e, f]$ now being calculated as the intercept of a great circle line through $r$ and $a$ with the corresponding row of the grid, as illustrated in Figure \ref{fig:spherical_anya_projections}(a). Similarly for the (great-circle) non-observable cone projection $(f, g], b$ of \ref{fig:spherical_anya_projections}(b), defining a standard successor node whereby $f$ is the last point in the cone projection $[e, f], r$ visible to $r$ and the interval $(f, g]$ is calculated to be maximal as per the definition of a successor node. Figures \ref{fig:spherical_anya_projections}(c) and (d) illustrate non-observable cone projections, these being visually identical to Anya. Only the underlying metric for the geodesic between points on the same row has changed.
Finally \textit{adjoint cone projections}, which are non-observable cone projections unique to Spherical Anya, arise when a successor is an adjoint successor, as illustrated in Figure \ref{fig:spherical_anya_projections}(e).

\section{Spherical Anya Algorithm}
\label{spherical_anya}
The recipe for Spherical Anya can now be presented using the foundational components and principles above. Globally, Spherical Anya is defined as per Anya, the difference being in the successors generated and the underlying geodesic and line-of-sight considerations. For easy reference, the Anya algorithm as per \cite{harabor2} is reproduced below.

\noindent\rule{\textwidth}{2pt}
Algorithm 1: Anya and Spherical Anya\newline
\noindent\rule{\textwidth}{1pt}
\begin{algorithmic}[1]
\Require Grid, source location $s$, target location $t$, initial root point $r_0$.
\State $open \gets \{(I = [s], r_0)\}$\algorithmiccomment{root point $r_0$ located off the grid}

\While{$open$ is not empty}
\State $(I, r) \gets pop$(open)
\If{$t \in I$}

\State \Return path\_to$(I)$
\EndIf

\ForAll{$(I', r') \in$ successors $(I, r)$}

\If {$\neg$should\_prune($I', r'$)}
        \State $open \gets open \cup\{(I', r')\}$ \algorithmiccomment{Successor pruning}
\EndIf

\EndFor

\EndWhile

\State \Return $null$

\end{algorithmic}
\noindent\rule{\textwidth}{1pt}\newline\newline
\noindent The algorithm for calculating successors in Spherical Anya differs in form from that of Anya due to the existence of adjoint nodes in Spherical geometry (see Algorithms 2 and 3). In addition, the operations {\tt generate-cone-successors} and {\tt generate-flat-successors} from \cite{harabor2} must be replaced by their great-circle analogs from Spherical Anya utilising great-circle geodesic and line-of-sight considerations. Furthermore, while the pruning strategies for Anya as described in \cite{harabor2} carry across to Spherical Anya, additional pruning strategies may be applied to adjoint successors (see Algorithm 4).\newline

\noindent\rule{\textwidth}{2pt}
Algorithm 2: Successors in Spherical Anya\newline
\noindent\rule{\textwidth}{1pt}
\begin{algorithmic}[1]
\Function{SUCCESSORS}{$n=(I, r)$} \algorithmiccomment{Input is current node}
\If{$n$ is the start node $s$}
\State \Return \texttt{generate-start-successors}($I = $[$s$])
\EndIf
\State $successors \gets \emptyset$

\If{$n$ is a flat node}

\State $p \gets $ endpoint of $I$ farthest from $r$ \algorithmiccomment{Successor interval starts from $p$}

\State $successors \gets$ \texttt{generate-flat-successors}($p, r$) \algorithmiccomment{Observable successors}

\If{$p$ is a corner turning point on a locally optimal path beginning at $r$}
\State $successors \gets successors ~~\cup$ \texttt{generate-cone-successors}($p, p, r$) \algorithmiccomment{Non-observable successors}
\EndIf

\If{$p$ is an adjoint turning point on a locally optimal path beginning at $r$}
\State $successors \gets successors ~~\cup$ \texttt{generate-adjoint-successors}($p, r$) \algorithmiccomment{Non-observable successors}
\EndIf

\ElsIf{$n$ it is a cone}

\State $a \gets$ left endpoint of $I$
\State $b \gets$ right endpoint of $I$
\State $successors \gets$ \texttt{generate-cone-successors}($a, b, r$) \algorithmiccomment{Observable successors}

\If{$a$ is a corner turning point on a locally optimal path beginning at $r$}
\State $successors \gets successors ~~\cup $ \texttt{generate-flat-successors}($a, r$) \algorithmiccomment{Non-observable successors}
\State $successors \gets successors ~~\cup $ \texttt{generate-cone-successors}($a, a, r$)\algorithmiccomment{Non-observable successors}
\EndIf

\If{$a$ is an adjoint turning point on a locally optimal path beginning at $r$}
\State $successors \gets successors ~~\cup$ \texttt{generate-adjoint-successors}($a, r$) \algorithmiccomment{Non-observable successors}
\EndIf

\If{$b$ is a corner turning point on a locally optimal path beginning at $r$}
\State $successors \gets successors ~~\cup $ \texttt{generate-flat-successors}($b, r$) \algorithmiccomment{Non-observable successors}
\State $successors \gets successors ~~\cup $ \texttt{generate-cone-successors}($b, b, r$) \algorithmiccomment{Non-observable successors}
\EndIf

\If{$b$ is an adjoint turning point on a locally optimal path beginning at $r$}
\State $successors \gets successors ~~\cup$ \texttt{generate-adjoint-successors}($b, r$) \algorithmiccomment{Non-observable successors}
\EndIf

\Else \algorithmiccomment{If node is neither flat or cone, it is an adjoint}

\State $p \gets $ endpoint of $I$ closest to $r$ 
\State $successors \gets successors ~~\cup$ \texttt{generate-adjoint-successors}($p, r$) \algorithmiccomment{Non-observable successors}
\State $successors \gets$ \texttt{generate-cone-successors}($p, p, r$) \algorithmiccomment{Non-observable successors}

\EndIf

\EndFunction
\end{algorithmic}
\noindent\rule{\textwidth}{1pt}

\subsection{Adjoint Successors}\label{section:adjoint_successors}
Adjoint successors occur only adjacent to the vertical edge of an obstacle, this by convention being defined as a line of constant latitude. Adjoint nodes in the northern hemisphere exist adjacent to the south-facing edge of an obstacle and symmetrically in the southern hemisphere exist adjacent to the north-facing edge of an obstacle (recall Figure \ref{fig:great_circle_at_edges}). Note also that an adjoint node for the same obstacle edge can share a root point with an  adjoint node adjacent to the same edge, the difference being only the row on which each of the respective intervals associated with the adjoint nodes lie. The following algorithm provides the recipe with which to calculate adjoint successors.\newline

Figure \ref{fig:intermediate_adjoint} provides an illustration of the procedure, whereby all adjoint successors of the search node $[a, b], r$ are calculated. The first adjoint successor of this search node is $[c, d), a$, the interval $[c, d)$ calculated as per the definition of an adjoint node, such that the width of the interval is maximal and hence such that $d$ is visible from the new root point $a$ but all points $p \in [c, d)$ are invisible to it. This adjoint node, having only a single possible successor, namely $[e, f), a$, is an intermediate adjoint node, as is $[e, f), a$ itself. Here, the extremal point $f$ is simply the intersection of the great circle line through $a$ and $d$ with the subsequent row of the grid. Finally, projecting the great-circle lines to the next row yields the last adjoint successor $[g, h), a$ associated with this obstacle edge. Search node $[g, h), a$ is not intermediate, having more than one successor and also does not yield further adjoint node successors.

\noindent\rule{\textwidth}{2pt}
Algorithm 3: Computing a set of adjoint successors\newline
\noindent\rule{\textwidth}{1pt}
\begin{algorithmic}[1]
\Function{Generate-Adjoint-Successors}{an interval endpoint $p$, a root point $r$}

\State \textit{successors} $\gets \emptyset$

\If {$p$ is a corner point}
\State $r^\prime \gets p$
\Else
\State $r^\prime \gets r$
\EndIf

\State $p^\prime \gets$ a point from adjacent row, farthest and non-observable one from $r^\prime$
\State $I_{max} \gets $ maximum closed interval beginning at $p^\prime$ and non-observable from $r^\prime$

\ForAll{$I \in \{$split $I_{max}$ at each corner point$\}$}
\State $n^\prime \gets$ a new search node with interval $I$ and root point $r^\prime$
\State \textit{successors} $\gets$ \textit{successors} $\cup$ $I$
\EndFor

\State \Return \textit{successors}

\EndFunction
\end{algorithmic}
\noindent\rule{\textwidth}{1pt}\newline

\begin{figure*}[h!]
    \centering
\begin{tikzpicture}[scale=1.5]
\draw[step=1cm, gray, very thin] (0,0) grid (5,6);
\filldraw[fill=black, draw=black] (0,2) rectangle (2,4);
\filldraw[fill=black, draw=black] (1,4) rectangle (2,5);
\filldraw[fill=black, draw=black] (4,3) rectangle (5,0);
\draw[red, very thick] (2,2) -- (4,2);
\draw[red, very thick, dashed] (2,0) .. controls (1.85,1) and (1.75,1) .. (2,2);
\draw[red, very thick, dashed] (2,0) .. controls (2.2,1) and (3,1.8) .. (4,2);
\draw[blue, very thick, dashed] (2,2) .. controls (1.95,2.5) and (1.85,2.5) .. (2,3);
\draw[blue, very thick, dashed] (2,2) .. controls (1.85,3) and (1.75,3) .. (2,4);
\draw[blue, very thick, dashed] (2,2) .. controls (1.7,3.5) and (1.6,3.5) .. (2,5);
\draw[blue, very thick, dashed] (2,2) .. controls (2.1,3.5) and (3,4.5) .. (4.8,5);
\draw[line width=0.8mm, blue] (2,3) -- (2.22,3);
\draw[line width=0.8mm, blue] (2,4) -- (2.9,4);
\draw[line width=1.3mm, blue] (2,5) -- (4.9,5);
\fill[white] (2, 2) circle[radius=1.1mm];
\fill[white] (2, 0) circle[radius=1.1mm];
\fill[blue] (2, 2) circle[radius=.8mm];
\foreach \x in {0,1,2,3,4,5}
   \draw (\x cm,1pt) -- (\x cm,-1pt) node[anchor=north] {$\x$};
\foreach \y in {0,1,2,3,4,5,6}
    \draw (1pt,\y cm) -- (-1pt,\y cm) node[anchor=east] {$\y$};
\fill[red] (2, 0) circle[radius=.8mm];
\node[] at (1.8,0.1) {$r$};
\node[white] at (1.75,2.2) {$a$};
\node[] at (3.9,2.2) {$b$};
\node[white] at (1.9,3.15) {$c$};
\node[] at (2.2,3.17) {$d$};
\node[white] at (1.9,4.15) {$e$};
\node[] at (2.95,4.17) {$f$};
\node[] at (2,5.2) {$g$};
\node[] at (4.88,5.2) {$h$};
\end{tikzpicture}
    \caption{A series of intermediate adjoint nodes which are calculated in the process of generating the final non-intermediate adjoint node $[g, h), a$.}
    \label{fig:intermediate_adjoint}
\end{figure*}
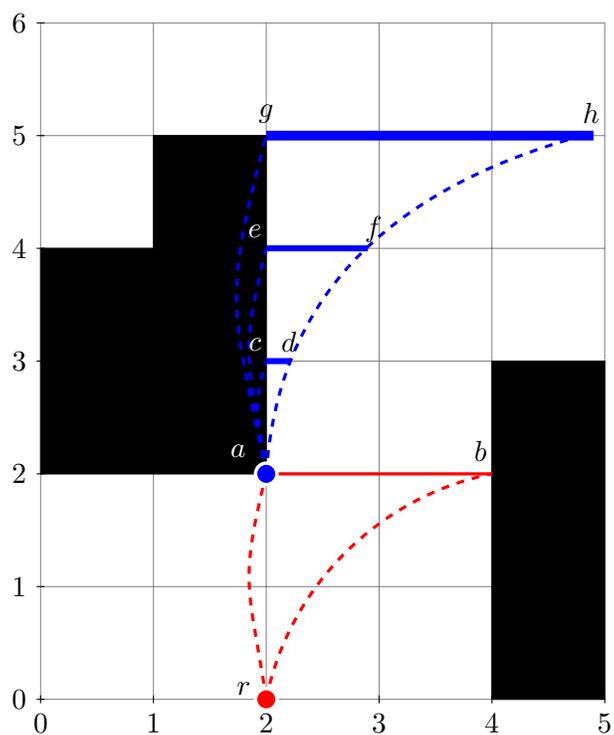

\noindent\rule{\textwidth}{2pt}
Algorithm 4: Intermediate node pruning\newline
\noindent\rule{\textwidth}{1pt}
\begin{algorithmic}[1]
\Function{Is-Intermediate}{$n=(I, r)$}
\If{$n$ is a flat node}

\State $p \gets$ endpoint of $I$ furthest from $r$

\If{$p$ is a corner turning point for a locally optimal path with prefix $\langle r, p \rangle$}
\State \Return \textit{false} \algorithmiccomment{$n$ has at least one non-observable successor; it cannot be intermediate}
\EndIf 

\ElsIf{$n$ is an adjoint node}
\State $p^\prime \gets$ endpoint of $I$ closest to $r$
\If{$p^\prime$ is a corner point}
\State \Return \textit{false} \algorithmiccomment{$n$ has at least one non-observable successor; it cannot be intermediate}
\EndIf

\Else
\If{$I$ has a closed endpoint that is also a corner point}
\State \Return \textit{false} \algorithmiccomment{$n$ has at least one non-observable successor; it cannot be intermediate}
\EndIf
\State $I^\prime \gets$ interval after projecting $r$ through $I$
\If{$I^\prime$ contains any corner points}
\State \Return \textit{false} \algorithmiccomment{$n$ has at least one non-observable successor; it cannot be intermediate}
\EndIf

\EndIf

\State \Return \textit{true}

\EndFunction
\end{algorithmic}
\noindent\rule{\textwidth}{1pt}

\subsection{Spherical Anya Algorithm: Examples}
We provide examples which illustrate how Spherical Anya works in practice by illustrating all successors for a given root node as would be necessary in an implementation of the Spherical Anya algorithm, search nodes and their corresponding successors comprising the fundamental units of the search algorithm.

\subsubsection{Successors of a flat search node}
In the following scenario the Spherical Anya algorithm is required to generate all successors of the flat search node $(r, a], r \equiv (0, 2], r$, of Figure \ref{fig:spherical_anya_example_1}. As the root point of the node is on the same row as the interval, the first projections are flat projections, generating flat successors until the wall of Figure \ref{fig:spherical_anya_example_1}(d) interrupts progress. In this instance the wall is created by the existence of two diagonally adjacent non-traversable cells centred on the point $e=(7, 2)$ of Figure \ref{fig:spherical_anya_example_1}(d). Not all of these successors survive however, those not containing the target point and being intermediate by virtue of their adjacency to a non-traversable cell, as $(b, c], r$ is in Figure \ref{fig:spherical_anya_example_1}(b) and $(d, e], r$ is in Figure \ref{fig:spherical_anya_example_1}(d), being discarded. These cannot precede a turning point on any optimal route. The search node $(d, e], r$ of Figure \ref{fig:spherical_anya_example_1}(d) is in addition a cul-de-sac node. It has no successors, and the target does not lie in the interval $(d, e]$.

All flat successors being thus generated, the Spherical Anya algorithm generates all flat-to-adjoint projections and flat-to-cone projections. These occur at the farthest corner of each non-traversable interval from the root $r$, along the same row and prior to the edge defined by the wall. In this example, flat-to-adjoint and flat-to-cone projections will be generated by Spherical Anya at each of the new root points $a = (2, 2)$ and $c = (5, 2)$, as illustrated in figures \ref{fig:spherical_anya_example_1}(e)--(h).

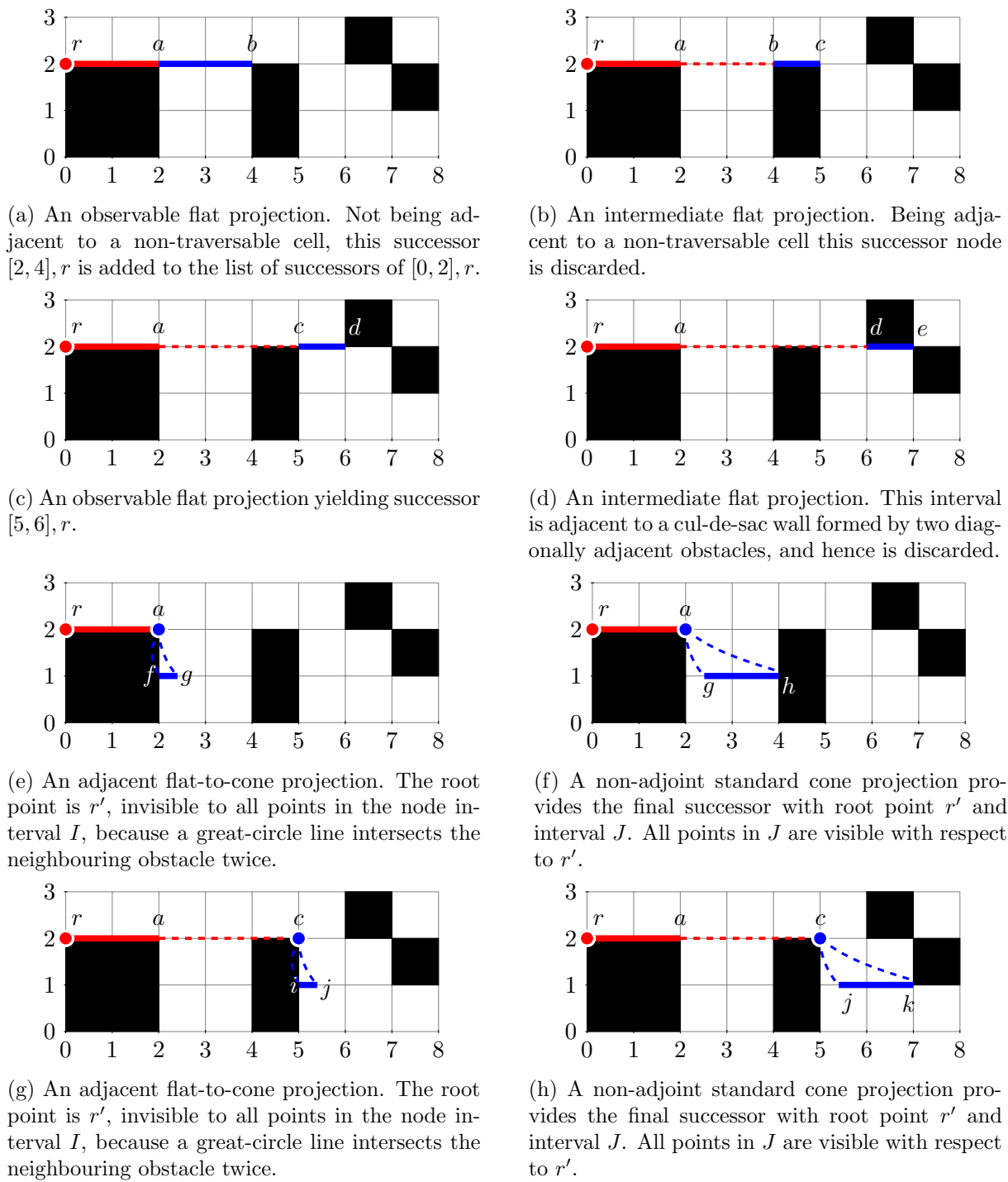
\begin{figure*}[h!]
    \centering
    \begin{subfigure}[t]{0.5\textwidth}
        \centering
\begin{tikzpicture}[scale=.75]
\draw[step=1cm,gray,very thin] (0,0) grid (8,3);
\filldraw[fill=black, draw=black] (0,0) rectangle (2,2);
\filldraw[fill=black, draw=black] (4,0) rectangle (5,2);
\filldraw[fill=black, draw=black] (6,2) rectangle (7,3);
\filldraw[fill=black, draw=black] (7,1) rectangle (8,2);
\draw[line width=1mm, red] (0,2) -- (2,2);
\draw[line width=1mm, blue] (2,2) -- (4,2);
\fill[white] (0, 2) circle[radius=1.9mm];
\foreach \x in {0,1,2,3,4,5,6,7,8}
   \draw (\x cm,1pt) -- (\x cm,-1pt) node[anchor=north] {$\x$};
\foreach \y in {0,1,2,3}
   \draw (1pt,\y cm) -- (-1pt,\y cm) node[anchor=east] {$\y$};
\fill[red] (0, 2) circle[radius=1.3mm];
\node[] at (0.25, 2.4) {$r$};
\node[] at (2, 2.4) {$a$};
\node[] at (4, 2.45) {$b$};

\end{tikzpicture}
        \caption{An observable flat projection. Not being adjacent to a non-traversable cell, this successor $[2, 4], r$ is added to the list of successors of $[0, 2], r$.}
    \end{subfigure}%
    ~~~~~~~~
    \begin{subfigure}[t]{0.5\textwidth}
        \centering
\begin{tikzpicture}[scale=.75]
\draw[step=1cm,gray,very thin] (0,0) grid (8,3);
\filldraw[fill=black, draw=black] (0,0) rectangle (2,2);
\filldraw[fill=black, draw=black] (4,0) rectangle (5,2);
\filldraw[fill=black, draw=black] (6,2) rectangle (7,3);
\filldraw[fill=black, draw=black] (7,1) rectangle (8,2);
\draw[line width=1mm, red] (0,2) -- (2,2);
\draw[line width=.5mm, red, dashed] (2,2) -- (5,2);
\draw[line width=1mm, blue] (4,2) -- (5,2);
\fill[white] (0, 2) circle[radius=1.9mm];
\foreach \x in {0,1,2,3,4,5,6,7,8}
   \draw (\x cm,1pt) -- (\x cm,-1pt) node[anchor=north] {$\x$};
\foreach \y in {0,1,2,3}
    \draw (1pt,\y cm) -- (-1pt,\y cm) node[anchor=east] {$\y$};
\fill[red] (0, 2) circle[radius=1.3mm];
\node[] at (0.25, 2.4) {$r$};
\node[] at (2, 2.4) {$a$};
\node[] at (4, 2.45) {$b$};
\node[] at (5, 2.4) {$c$};

\end{tikzpicture}
        \caption{An intermediate flat projection. Being adjacent to a non-traversable cell this successor node is discarded.}
    \end{subfigure}%

    \begin{subfigure}[t]{0.5\textwidth}
        \centering
\begin{tikzpicture}[scale=.75]
\draw[step=1cm,gray,very thin] (0,0) grid (8,3);
\filldraw[fill=black, draw=black] (0,0) rectangle (2,2);
\filldraw[fill=black, draw=black] (4,0) rectangle (5,2);
\filldraw[fill=black, draw=black] (6,2) rectangle (7,3);
\filldraw[fill=black, draw=black] (7,1) rectangle (8,2);
\draw[line width=1mm, red] (0,2) -- (2,2);
\draw[line width=.5mm, red, dashed] (2,2) -- (5,2);
\draw[line width=1mm, blue] (5,2) -- (6,2);
\fill[white] (0, 2) circle[radius=1.9mm];
\foreach \x in {0,1,2,3,4,5,6,7,8}
   \draw (\x cm,1pt) -- (\x cm,-1pt) node[anchor=north] {$\x$};
\foreach \y in {0,1,2,3}
    \draw (1pt,\y cm) -- (-1pt,\y cm) node[anchor=east] {$\y$};
\fill[red] (0, 2) circle[radius=1.3mm];
\node[] at (0.25, 2.4) {$r$};
\node[] at (2, 2.4) {$a$};
\node[] at (5, 2.4) {$c$};
\node[white] at (6.2, 2.44) {$d$};

\end{tikzpicture}
        \caption{An observable flat projection yielding successor $[5, 6], r$.}
    \end{subfigure}%
    ~~~~~~~~
    \begin{subfigure}[t]{0.5\textwidth}
        \centering
\begin{tikzpicture}[scale=.75]
\draw[step=1cm,gray,very thin] (0,0) grid (8,3);
\filldraw[fill=black, draw=black] (0,0) rectangle (2,2);
\filldraw[fill=black, draw=black] (4,0) rectangle (5,2);
\filldraw[fill=black, draw=black] (6,2) rectangle (7,3);
\filldraw[fill=black, draw=black] (7,1) rectangle (8,2);
\draw[line width=1mm, red] (0,2) -- (2,2);
\draw[line width=.5mm, red, dashed] (2,2) -- (6,2);
\draw[line width=1mm, blue] (6,2) -- (7,2);
\fill[white] (0, 2) circle[radius=1.9mm];
\foreach \x in {0,1,2,3,4,5,6,7,8}
   \draw (\x cm,1pt) -- (\x cm,-1pt) node[anchor=north] {$\x$};
\foreach \y in {0,1,2,3}
    \draw (1pt,\y cm) -- (-1pt,\y cm) node[anchor=east] {$\y$};
\fill[red] (0, 2) circle[radius=1.3mm];
\node[] at (0.25, 2.4) {$r$};
\node[] at (2, 2.4) {$a$};
\node[white] at (6.2, 2.44) {$d$};
\node[] at (7.2, 2.34) {$e$};
\end{tikzpicture}
        \caption{An intermediate flat projection. This interval is adjacent to a cul-de-sac wall formed by two diagonally adjacent obstacles, and hence is discarded.}
    \end{subfigure}%

        \begin{subfigure}[t]{0.5\textwidth}
        \centering
\begin{tikzpicture}[scale=.75]
\draw[step=1cm,gray,very thin] (0,0) grid (8,3);
\filldraw[fill=black, draw=black] (0,0) rectangle (2,2);
\filldraw[fill=black, draw=black] (4,0) rectangle (5,2);
\filldraw[fill=black, draw=black] (6,2) rectangle (7,3);
\filldraw[fill=black, draw=black] (7,1) rectangle (8,2);
\draw[line width=1mm, red] (0,2) -- (2,2);
\draw[line width=1mm, blue] (2,1) -- (2.4,1);

\draw[blue, very thick, dashed] (2,2) .. controls (1.8,1.5) and (1.8,1.5) .. (2,1);
\draw[blue, very thick, dashed] (2,2) .. controls (2,1.4) and (2.4,1) .. (2.4,1);
\fill[white] (0, 2) circle[radius=1.9mm];
\fill[white] (2, 2) circle[radius=1.9mm];
\fill[blue] (2, 2) circle[radius=1.3mm];
\foreach \x in {0,1,2,3,4,5,6,7,8}
   \draw (\x cm,1pt) -- (\x cm,-1pt) node[anchor=north] {$\x$};
\foreach \y in {0,1,2,3}
   \draw (1pt,\y cm) -- (-1pt,\y cm) node[anchor=east] {$\y$};
\fill[red] (0, 2) circle[radius=1.3mm];
\node[] at (0.25, 2.4) {$r$};
\node[] at (2, 2.4) {$a$};
\node[white] at (1.8, 1) {$f$};
\node[] at (2.6, .95) {$g$};

\end{tikzpicture}
        \caption{An adjacent flat-to-cone projection. The root point is $r'$, invisible to all points in the node interval $I$, because a great-circle line intersects the neighbouring obstacle twice.}
    \end{subfigure}%
    ~~~~~~~~~
    \begin{subfigure}[t]{0.5\textwidth}
        \centering
\begin{tikzpicture}[scale=.75]
\draw[step=1cm,gray,very thin] (0,0) grid (8,3);
\filldraw[fill=black, draw=black] (0,0) rectangle (2,2);
\filldraw[fill=black, draw=black] (4,0) rectangle (5,2);
\filldraw[fill=black, draw=black] (6,2) rectangle (7,3);
\filldraw[fill=black, draw=black] (7,1) rectangle (8,2);
\draw[line width=1mm, red] (0,2) -- (2,2);
\draw[line width=1mm, blue] (2.4,1) -- (4,1);
\draw[blue, very thick, dashed] (2,2) .. controls (2,1.4) and (2.4,1) .. (2.4,1);
\draw[blue, very thick, dashed] (2,2) .. controls (2.5,1.6) and (3,1.4) .. (4,1.1);

\fill[white] (0, 2) circle[radius=1.9mm];
\fill[white] (2, 2) circle[radius=1.9mm];
\fill[blue] (2, 2) circle[radius=1.3mm];
\foreach \x in {0,1,2,3,4,5,6,7,8}
   \draw (\x cm,1pt) -- (\x cm,-1pt) node[anchor=north] {$\x$};
\foreach \y in {0,1,2,3}
   \draw (1pt,\y cm) -- (-1pt,\y cm) node[anchor=east] {$\y$};
\fill[red] (0, 2) circle[radius=1.3mm];
\node[] at (0.25, 2.4) {$r$};
\node[] at (2, 2.4) {$a$};
\node[] at (2.5, .7) {$g$};
\node[white] at (4.21, .8) {$h$};
\end{tikzpicture}
        \caption{A non-adjoint standard cone projection provides the final successor with root point $r'$ and interval $J$. All points in $J$ are visible with respect to $r'$.}
    \end{subfigure}%

\begin{subfigure}[t]{0.5\textwidth}
        \centering
\begin{tikzpicture}[scale=.75]
\draw[step=1cm,gray,very thin] (0,0) grid (8,3);
\filldraw[fill=black, draw=black] (0,0) rectangle (2,2);
\filldraw[fill=black, draw=black] (4,0) rectangle (5,2);
\filldraw[fill=black, draw=black] (6,2) rectangle (7,3);
\filldraw[fill=black, draw=black] (7,1) rectangle (8,2);
\draw[line width=1mm, red] (0,2) -- (2,2);
\draw[line width=.5mm, red, dashed] (2,2) -- (5,2);
\draw[line width=1mm, blue] (5,1) -- (5.4,1);

\draw[blue, very thick, dashed] (5,2) .. controls (4.8,1.5) and (4.8,1.5) .. (5,1);
\draw[blue, very thick, dashed] (5,2) .. controls (5,1.4) and (5.4,1) .. (5.4,1);
\fill[white] (0, 2) circle[radius=1.9mm];
\fill[white] (5, 2) circle[radius=1.9mm];
\fill[blue] (5, 2) circle[radius=1.3mm];
\foreach \x in {0,1,2,3,4,5,6,7,8}
   \draw (\x cm,1pt) -- (\x cm,-1pt) node[anchor=north] {$\x$};
\foreach \y in {0,1,2,3}
   \draw (1pt,\y cm) -- (-1pt,\y cm) node[anchor=east] {$\y$};
\fill[red] (0, 2) circle[radius=1.3mm];
\node[] at (0.25, 2.4) {$r$};
\node[] at (2, 2.4) {$a$};
\node[] at (5, 2.4) {$c$};
\node[white] at (4.9, 1) {$i$};
\node[] at (5.6, .9) {$j$};

\end{tikzpicture}
        \caption{An adjacent flat-to-cone projection. The root point is $r'$, invisible to all points in the node interval $I$, because a great-circle line intersects the neighbouring obstacle twice.}
    \end{subfigure}%
    ~~~~~~~~
    \begin{subfigure}[t]{0.5\textwidth}
        \centering
\begin{tikzpicture}[scale=.75]
\draw[step=1cm,gray,very thin] (0,0) grid (8,3);
\filldraw[fill=black, draw=black] (0,0) rectangle (2,2);
\filldraw[fill=black, draw=black] (4,0) rectangle (5,2);
\filldraw[fill=black, draw=black] (6,2) rectangle (7,3);
\filldraw[fill=black, draw=black] (7,1) rectangle (8,2);
\draw[line width=1mm, red] (0,2) -- (2,2);
\draw[line width=.5mm, red, dashed] (2,2) -- (5,2);
\draw[line width=1mm, blue] (5.4,1) -- (7,1);\fill[white] (0, 2) circle[radius=1.9mm];

\draw[blue, very thick, dashed] (5,2) .. controls (5,1.4) and (5.4,1) .. (5.4,1);
\draw[blue, very thick, dashed] (5,2) .. controls (5.5,1.6) and (6,1.4) .. (7,1.1);

\fill[white] (5, 2) circle[radius=1.9mm];
\fill[blue] (5, 2) circle[radius=1.3mm];
\foreach \x in {0,1,2,3,4,5,6,7,8}
   \draw (\x cm,1pt) -- (\x cm,-1pt) node[anchor=north] {$\x$};
\foreach \y in {0,1,2,3}
   \draw (1pt,\y cm) -- (-1pt,\y cm) node[anchor=east] {$\y$};
\fill[red] (0, 2) circle[radius=1.3mm];
\node[] at (0.25, 2.4) {$r$};
\node[] at (2, 2.4) {$a$};
\node[] at (5, 2.4) {$c$};
\node[] at (5.6, .6) {$j$};
\node[] at (6.9, .6) {$k$};
\end{tikzpicture}
        \caption{A non-adjoint standard cone projection provides the final successor with root point $r'$ and interval $J$. All points in $J$ are visible with respect to $r'$.}
    \end{subfigure}%
    \begin{subfigure}[t]{0.5\textwidth}
        \centering
\begin{tikzpicture}[scale=.75]
\end{tikzpicture}
    \end{subfigure}%
\caption{All successors of search node $(r, a], r$ are illustrated to provide an example of successor generation in Spherical Anya.}
\label{fig:spherical_anya_example_1}
\end{figure*}

\subsubsection{Successors of a cone search node}
In the next example, the search node $[a, b], r$ is now a cone search node, as illustrated in Figure \ref{fig:spherical_anya_example_2}, as the root point $r$ no longer lies on the same row as the interval $[a, b]$. The first successor of this is calculated as the great-circle cone projection to the next row of the map, yielding successor $[c, d], r$ of Figure \ref{fig:spherical_anya_example_2}(a) with the same root point as its parent. Subsequent to the (non-adjoint) cone projection yielding search node $[c, d], r$ all invisible cone projections must be generated, these being invisible from root point $r$. This is achieved by forming the search node $(d, e], b$ of Figure \ref{fig:spherical_anya_example_2}(b) by searching right from $d$, the last point on row 3 visible to $r$. 

Once there are no further options for expanding a great-circle visibility cone to the third row of the map the search continues horizontally, generating flat, intermediate, flat-to-adjoint and flat-to-cone projections. This process continues along row 2 until a wall is reached, or as is the case in this scenario, the edge of the map. The first flat projection is $(b, f], b$, this being the case because the new root point $b$ is selected due to being the extremal edge of the parent interval. Node $(b, f], b$ however, being adjacent to an obstacle, is intermediate and not added to the list of successors. The subsequent successor $(f,g], b$, while not being adjacent to a non-traversable cell in this instance, is also not added to the list of successors for the parent node $[a, b], r$ because $(f,g], b$ is adjacent to the edge of the map and therefore is a successor without successors. Given the target doesn't lie in the interval $(f,g]$ it is a cul-de-sac, and hence dropped as well. The only remaining successors in this scenario are therefore the flat-to-adjoint and flat-to-cone successors of Figures \ref{fig:spherical_anya_example_2}(e) and (f).

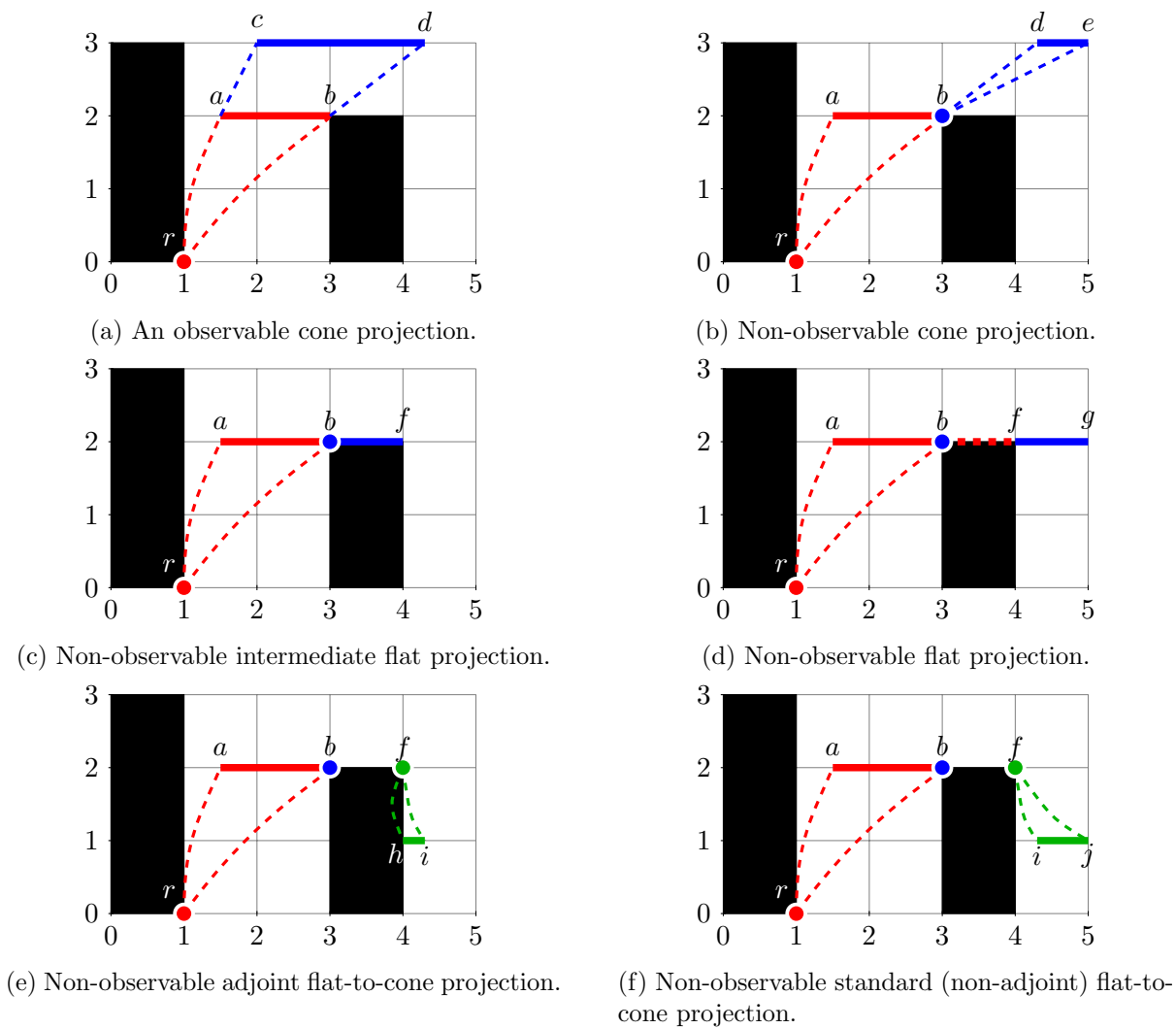
\begin{figure*}[h!]
\centering
\begin{subfigure}[t]{0.5\textwidth}
	\centering
\begin{tikzpicture}[scale=1]
\draw[step=1cm,gray,very thin] (0,0) grid (5,3);
\filldraw[fill=black, draw=black] (0,0) rectangle (1,3);
\filldraw[fill=black, draw=black] (3,0) rectangle (4,2);
\draw[line width=1mm, red] (1.5,2) -- (3,2);
\draw[line width=1mm, blue] (2,3) -- (4.3,3);
\draw[red, very thick, dashed] (1,0) .. controls (1,1) and (1.2,1.3) .. (1.5,2);
\draw[red, very thick, dashed] (1,0) .. controls (2,1.4) and (3,1.9) .. (3, 2);
\draw[blue, very thick, dashed] (1.5,2) .. controls (1.5,2) and (2,3) .. (2,3);
\draw[blue, very thick, dashed] (3,2) .. controls (3,2) and (4.3,3) .. (4.3,3);
\foreach \x in {0,1,2,3,4,5}
   \draw (\x cm,1pt) -- (\x cm,-1pt) node[anchor=north] {$\x$};
\foreach \y in {0,1,2,3}
    \draw (1pt,\y cm) -- (-1pt,\y cm) node[anchor=east] {$\y$};
\fill[white] (1, 0) circle[radius=1.5mm];
\fill[red] (1,0) circle[radius=1mm];
\node[white] at (0.8, .3) {$r$};
\node[] at (3, 2.3) {$b$};
\node[black] at (1.45, 2.25) {$a$};
\node[] at (2, 3.3) {$c$};
\node[] at (4.3, 3.3) {$d$};
\end{tikzpicture}
\caption{An observable cone projection.}
\end{subfigure}%
~~~~~~~~
\begin{subfigure}[t]{0.5\textwidth}
	\centering
\begin{tikzpicture}[scale=1]
\draw[step=1cm,gray,very thin] (0,0) grid (5,3);
\filldraw[fill=black, draw=black] (0,0) rectangle (1,3);
\filldraw[fill=black, draw=black] (3,0) rectangle (4,2);
\draw[line width=1mm, red] (1.5,2) -- (3,2);
\draw[red, very thick, dashed] (1,0) .. controls (1,1) and (1.2,1.3) .. (1.5,2);
\draw[red, very thick, dashed] (1,0) .. controls (2,1.4) and (3,1.9) .. (3, 2);
\draw[blue, very thick, dashed] (3,2) .. controls (3,2) and (4.3,3) .. (4.3,3);
\draw[blue, very thick, dashed] (3,2) .. controls (3,2) and (5,3) .. (5,3);
\draw[line width=1mm, blue] (4.3,3) -- (5,3);
\fill[white] (1, 0) circle[radius=1.5mm];
\fill[white] (3,2) circle[radius=1.5mm];
\fill[blue] (3,2) circle[radius=1mm];
\foreach \x in {0,1,2,3,4,5}
   \draw (\x cm,1pt) -- (\x cm,-1pt) node[anchor=north] {$\x$};
\foreach \y in {0,1,2,3}
    \draw (1pt,\y cm) -- (-1pt,\y cm) node[anchor=east] {$\y$};
\fill[red] (1,0) circle[radius=1mm];
\node[white] at (0.8, .3) {$r$};
\node[] at (3, 2.3) {$b$};
\node[black] at (1.5, 2.25) {$a$};
\node[] at (5, 3.25) {$e$};
\node[] at (4.3, 3.3) {$d$};
\end{tikzpicture}
\caption{Non-observable cone projection.}
\end{subfigure}%

\begin{subfigure}[t]{0.5\textwidth}
\centering
\begin{tikzpicture}[scale=1]
\draw[step=1cm,gray,very thin] (0,0) grid (5,3);
\filldraw[fill=black, draw=black] (0,0) rectangle (1,3);
\filldraw[fill=black, draw=black] (3,0) rectangle (4,2);
\draw[line width=1mm, blue] (3,2) -- (4,2);
\draw[line width=1mm, red] (1.5,2) -- (3,2);
\draw[red, very thick, dashed] (1,0) .. controls (1,1) and (1.2,1.3) .. (1.5,2);
\draw[red, very thick, dashed] (1,0) .. controls (2,1.4) and (3,1.9) .. (3, 2);
\foreach \x in {0,1,2,3,4,5}
   \draw (\x cm,1pt) -- (\x cm,-1pt) node[anchor=north] {$\x$};
\foreach \y in {0,1,2,3}
    \draw (1pt,\y cm) -- (-1pt,\y cm) node[anchor=east] {$\y$};
\node[black] at (1.5, 2.25) {$a$};
\node[] at (3, 2.3) {$b$};
\node[black] at (4, 2.3) {$f$};
\node[white] at (0.8, .3) {$r$};
\fill[white] (1, 0) circle[radius=1.5mm];
\fill[red] (1,0) circle[radius=1mm];
\fill[white] (3,2) circle[radius=1.5mm];
\fill[blue] (3,2) circle[radius=1mm];
\end{tikzpicture}
\caption{Non-observable intermediate flat projection.}
\end{subfigure}%
~~~~~~~~
\begin{subfigure}[t]{0.5\textwidth}
 \centering
\begin{tikzpicture}[scale=1]
\draw[step=1cm,gray,very thin] (0,0) grid (5,3);
\filldraw[fill=black, draw=black] (0,0) rectangle (1,3);
\filldraw[fill=black, draw=black] (3,0) rectangle (4,2);
\draw[line width=1mm, red, dashed] (3,2) -- (4,2);
\draw[line width=1mm, blue] (4,2) -- (5,2);
\draw[line width=1mm, red] (1.5,2) -- (3,2);
\draw[red, very thick, dashed] (1,0) .. controls (1,1) and (1.2,1.3) .. (1.5,2);
\draw[red, very thick, dashed] (1,0) .. controls (2,1.4) and (3,1.9) .. (3, 2);
\fill[white] (3,2) circle[radius=1.5mm];
\fill[blue] (3,2) circle[radius=1mm];
\foreach \x in {0,1,2,3,4,5}
   \draw (\x cm,1pt) -- (\x cm,-1pt) node[anchor=north] {$\x$};
\foreach \y in {0,1,2,3}
    \draw (1pt,\y cm) -- (-1pt,\y cm) node[anchor=east] {$\y$};
    \node[black] at (1.5, 2.25) {$a$};
\node[] at (3, 2.3) {$b$};
\node[] at (5, 2.3) {$g$};
\node[black] at (4, 2.3) {$f$};
\node[white] at (0.8, .3) {$r$};
\fill[white] (1, 0) circle[radius=1.5mm];
\fill[red] (1,0) circle[radius=1mm];
\end{tikzpicture}
\caption{Non-observable flat projection.}
\end{subfigure}%

\begin{subfigure}[t]{0.5\textwidth}
\centering
\begin{tikzpicture}[scale=1]
\draw[step=1cm,gray,very thin] (0,0) grid (5,3);
\filldraw[fill=black, draw=black] (0,0) rectangle (1,3);
\filldraw[fill=black, draw=black] (3,0) rectangle (4,2);
\draw[line width=1mm, black!30!green] (4, 1) -- (4.3,1);
\draw[line width=1mm, red] (1.5,2) -- (3,2);
\draw[black!30!green, very thick, dashed] (4,2) .. controls (3.8,1.5) and (3.8,1.5) .. (4,1);
\draw[black!30!green, very thick, dashed] (4,2) .. controls (4.1,1.3) and (4.1,1.3) .. (4.3,1);
\draw[red, very thick, dashed] (1,0) .. controls (1,1) and (1.2,1.3) .. (1.5,2);
\draw[red, very thick, dashed] (1,0) .. controls (2,1.4) and (3,1.9) .. (3, 2);
\fill[white] (3,2) circle[radius=1.5mm];
\fill[blue] (3,2) circle[radius=1mm];
\foreach \x in {0,1,2,3,4,5}
   \draw (\x cm,1pt) -- (\x cm,-1pt) node[anchor=north] {$\x$};
\foreach \y in {0,1,2,3}
    \draw (1pt,\y cm) -- (-1pt,\y cm) node[anchor=east] {$\y$};
\node[black] at (1.5, 2.25) {$a$};
\node[] at (4.3, .8) {$i$};
\node[] at (3, 2.3) {$b$};
\node[white] at (3.9, .83) {$h$};
\node[white] at (0.8, .3) {$r$};
\fill[white] (1, 0) circle[radius=1.5mm];
\fill[red] (1,0) circle[radius=1mm];
\fill[white] (4,2) circle[radius=1.5mm];
\fill[fill=black!30!green] (4,2) circle[radius=1mm];
\node[black] at (4, 2.25) {$f$};
\end{tikzpicture}
\caption{Non-observable adjoint flat-to-cone projection.}
\end{subfigure}%
~~~~~~~~
\begin{subfigure}[t]{0.5\textwidth}
 \centering
\begin{tikzpicture}[scale=1]
\draw[step=1cm,gray,very thin] (0,0) grid (5,3);
\filldraw[fill=black, draw=black] (0,0) rectangle (1,3);
\filldraw[fill=black, draw=black] (3,0) rectangle (4,2);
\draw[line width=1mm, red] (1.5,2) -- (3,2);
\draw[line width=1mm, black!30!green] (4.3, 1) -- (5,1);
\draw[black!30!green, very thick, dashed] (4,2) .. controls (4.1,1.3) and (4.1,1.3) .. (4.3,1);
\draw[black!30!green, very thick, dashed] (4,2) .. controls (4.5,1.3) and (4.5,1.3) .. (5,1);
\draw[red, very thick, dashed] (1,0) .. controls (1,1) and (1.2,1.3) .. (1.5,2);
\draw[red, very thick, dashed] (1,0) .. controls (2,1.4) and (3,1.9) .. (3, 2);
\fill[white] (3,2) circle[radius=1.5mm];
\fill[blue] (3,2) circle[radius=1mm];
\fill[white] (4,2) circle[radius=1.5mm];
\fill[fill=black!30!green] (4,2) circle[radius=1mm];
\foreach \x in {0,1,2,3,4,5}
   \draw (\x cm,1pt) -- (\x cm,-1pt) node[anchor=north] {$\x$};
\foreach \y in {0,1,2,3}
    \draw (1pt,\y cm) -- (-1pt,\y cm) node[anchor=east] {$\y$};
\node[black] at (1.5, 2.25) {$a$};
\node[] at (4.3, .8) {$i$};
\node[] at (5, .8) {$j$};
\node[black] at (4, 2.25) {$f$};
\node[] at (3, 2.3) {$b$};
\node[white] at (0.8, .3) {$r$};
\fill[white] (1, 0) circle[radius=1.5mm];
\fill[red] (1,0) circle[radius=1mm];
\end{tikzpicture}
\caption{Non-observable standard (non-adjoint) flat-to-cone projection.}
\end{subfigure}%
\caption{A series of flat, flat-to-adjoint and flat-to-cone projections define the successors to the illustrated search node $[a, b], r$.}
\label{fig:spherical_anya_example_2}
\end{figure*}

\section{Results}
\label{Results}
Results are reported on execution time (ET) and number of tiles crossed (CT), a tile being a single unit of the world map grid. Statistics are calculated for 4 game maps plus city/street maps and four random maps, in addition to the Earth maps determined by the NOAA bathymetric dataset at several resolutions. Comparisons and benchmarks to existing Euclidean approaches being already reported in \cite{harabor2}, the following results compare exclusively the difference between Anya and Spherical Anya. Both algorithms are implemented in Python and executed on an AWS EC2 Dual Core Intel Xeon Platinum 8000 ``R5.Large'' Instance with 16GB of RAM. The definition of the benchmark is designed to measure the relative expense in time and route length of achieving optimality in Spherical geometry with the application of Spherical Anya, compared to applying Anya with the assumption that the underlying geometry is Euclidean. We highlight that such a comparison is dependent on programming language and employed data structures, making all benchmark statistics herein indicative in Python only. We do not provide a comparison across programming languages, for which these figures would likely be different. For any world map defined as a list of coordinate pairs $[(x_1, y_1), ...., (x_m, y_n)]$ for an $n \times m$ world, each item is assigned a value of zero for a non-traversable cell and one for a traversable cell. For the application of Anya, source and target points are selected uniformly from the set of traversable cells, and Anya is applied on the world map directly, considering each cell to be a uniform square of unit length. The resulting route is mapped to Spherical coordinates with the linear mapping, in degrees, being

\begin{align}\label{eq:spherical_mapping}
x_i^{lat} &= 90 \left (\frac{2x_i}{m} -1\right ) \nonumber\\
y_j^{lon} &= 180 \left (\frac{2y_j}{n} - 1\right ).
\end{align}

\begin{table}[t!]
\resizebox{\textwidth}{!}{%
\begin{tabular}{|l|l|l|l|l|l|l|l|l|l|l|l|l|l|l|l|l|l|}
\hline
\multicolumn{1}{|c|}{\multirow{2}{*}{Benchmark}} & \multicolumn{1}{c|}{\multirow{2}{*}{\#Maps}} & \multicolumn{1}{c|}{\multirow{2}{*}{\#Instances}} & \multicolumn{8}{c|}{Spherical Anya vs. Anya ET (All Routes)} \\ \cline{4-11} 
\multicolumn{1}{|c|}{} & \multicolumn{1}{c|}{} & \multicolumn{1}{c|}{} & \multicolumn{1}{c|}{\% less} & \multicolumn{1}{c|}{Min} & \multicolumn{1}{c|}{Q1} & \multicolumn{1}{c|}{Median} & \multicolumn{1}{c|}{Mean} & \multicolumn{1}{c|}{Q3} & \multicolumn{1}{c|}{Max} & \multicolumn{1}{c|}{StDev} \\ 
\hline
Starcraft & 75 & 18,068 & 35.27 & 0.003 & 0.806 & 1.188 & 2.048 & 2.018 & 60.019 & 3.451 \\
Dragon Age 2 & 67 & 14,369 & 31.58 & 0.006 & 0.857 & 1.000 & 1.368 & 1.525 & 56.366 & 1.359 \\
Baldurs Gate II, scaled to $512\times512$ & 75 & 17,580 & 31.18 & 0.005 & 0.861 & 1.000 & 1.561 & 1.690 & 53.262 & 1.798 \\
Warcraft III, scaled to $512\times512$ & 36 & 7,992 & 38.63 & 0.015 & 0.833 & 1.000 & 2.881 & 1.367 & 60.421 & 6.346 \\
City/street Maps & 90 & 21,515 & 33.53 & 0.014 & 0.804 & 1.261 & 3.424 & 3.103 & 60.442 & 6.160 \\
Random 10\%, $256\times512$ & 75 & 18,241 & 53.37 & 0.000 & 0.476 & 0.911 & 1.510 & 1.686 & 13.807 & 1.906 \\
Random 20\%, $256\times512$ & 75 & 18,305 & 46.48 & 0.000 & 0.593 & 1.052 & 1.587 & 1.876 & 13.782 & 1.724 \\
Random 30\%, $256\times512$ & 75 & 18,175 & 41.67 & 0.001 & 0.725 & 1.112 & 1.419 & 1.757 & 13.724 & 1.209 \\
Random 40\%, $256\times512$ & 75 & 16,091 & 31.73 & 0.000 & 0.944 & 1.100 & 1.207 & 1.359 & 13.582 & 0.694 \\
NOAA bathymetry, $10800\times21600$ & 1 & 22,060 & 74.20 & 0.003 & 0.796 & 0.905 & 1.630 & 1.000 & 90.596 & 5.085 \\
\hline
\end{tabular}
}
\caption{Execution time (ET) on a series of standard game maps, random maps and NOAA bathymetry is calculated from randomly sampled source and target points and alternatively considered to be defined over Euclidean and Spherical geometry. The column ``\% less" indicates the percentage of routes whereupon Spherical Anya ET is less than Anya ET.}
\label{table:results_table_et}
\end{table}

\begin{table}[t!]
\resizebox{\textwidth}{!}{%
\begin{tabular}{|l|l|l|l|l|l|l|l|l|l|l|l|l|l|l|l|l|l|}
\hline
\multicolumn{1}{|c|}{\multirow{2}{*}{Benchmark}} & \multicolumn{1}{c|}{\multirow{2}{*}{\#Maps}} & \multicolumn{1}{c|}{\multirow{2}{*}{\#Instances}} & \multicolumn{8}{c|}{Spherical Anya vs. Anya Route Length (Recipe 1)} \\ \cline{4-11} 
\multicolumn{1}{|c|}{} & \multicolumn{1}{c|}{} & \multicolumn{1}{c|}{} & \multicolumn{1}{c|}{\% less} & \multicolumn{1}{c|}{Min} & \multicolumn{1}{c|}{Q1} & \multicolumn{1}{c|}{Median} & \multicolumn{1}{c|}{Mean} & \multicolumn{1}{c|}{Q3} & \multicolumn{1}{c|}{Max} & \multicolumn{1}{c|}{StDev} \\ 
\hline
Starcraft & 75 & 18,068 & 56.98 & 0.132 & 0.990 & 1.000 & 0.977 & 1.000 & 1.509 & 0.069 \\
Dragon Age 2 & 67 & 14,369 & 36.27 & 0.162 & 1.000 & 1.000 & 0.995 & 1.000 & 2.047 & 0.047 \\
Baldurs Gate II, scaled to $512\times512$ & 75 & 17,580 & 36.71 & 0.191 & 0.999 & 1.000 & 0.997 & 1.000 & 5.528 & 0.082 \\
Warcraft III, scaled to $512\times512$ & 36 & 7,992 & 29.24 & 0.279 & 0.998 & 1.000 & 1.008 & 1.000 & 5.201 & 0.143 \\
City/street Maps & 90 & 21,515 & 60.54 & 0.180 & 0.910 & 0.996 & 0.947 & 1.000 & 5.983 & 0.137 \\
Random 10\%, $256\times512$ & 75 & 18,241 & 96.75 & 0.049 & 0.738 & 0.931 & 0.834 & 0.988 & 1.003 & 0.206 \\
Random 20\%, $256\times512$ & 75 & 18,305 & 95.96 & 0.046 & 0.748 & 0.932 & 0.836 & 0.988 & 1.002 & 0.205 \\
Random 30\%, $256\times512$ & 75 & 18,175 & 93.84 & 0.053 & 0.792 & 0.949 & 0.851 & 0.994 & 1.001 & 0.204 \\
Random 40\%, $256\times512$ & 75 & 16,091 & 84.41 & 0.024 & 0.965 & 0.998 & 0.924 & 1.000 & 1.001 & 0.172 \\
NOAA bathymetry, $10800\times21600$ & 1 & 22,060 & 37.44 & 0.320 & 1.000 & 1.000 & 0.997 & 1.000 & 1.103 & 0.017 \\
\hline
\end{tabular}
}
\caption{Route Length (RL) on a series of standard game maps, random maps and NOAA bathymetry is calculated from randomly sampled source and target points and alternatively considered to be defined over Euclidean and Spherical geometry. The table includes both illegal and legal Anya routes generated by interpolating between route points with a great-circle (Recipe 1). The column ``\% less" indicates the percentage of routes whereupon Spherical Anya RL is less than Anya RL. The ratio being greater than 1 occurs when non-legal routes are permitted, i.e. when Anya generates a route which in Spherical geometry intersects at least one non-traversable cell. All routes generated by Spherical Anya are legal.}
\label{table:results_table_recipe_1}
\end{table}

\begin{table}[]
\resizebox{\textwidth}{!}{%
\begin{tabular}{|l|l|l|l|l|l|l|l|l|l|l|l|l|l|l|l|l|l|}
\hline
\multicolumn{1}{|c|}{\multirow{2}{*}{Benchmark}} & \multicolumn{1}{c|}{\multirow{2}{*}{\#Maps}} & \multicolumn{1}{c|}{\multirow{2}{*}{\#Instances}} & \multicolumn{8}{c|}{Spherical Anya vs. Anya Route Length (Recipe 2)} \\ \cline{4-11} 
\multicolumn{1}{|c|}{} & \multicolumn{1}{c|}{} & \multicolumn{1}{c|}{} & \multicolumn{1}{c|}{\% less} & \multicolumn{1}{c|}{Min} & \multicolumn{1}{c|}{Q1} & \multicolumn{1}{c|}{Median} & \multicolumn{1}{c|}{Mean} & \multicolumn{1}{c|}{Q3} & \multicolumn{1}{c|}{Max} & \multicolumn{1}{c|}{StDev} \\ 
\hline
Starcraft & 75 & 18,068 & 91.22 & 0.131 & 0.969 & 0.993 & 0.965 & 0.999 & 1.000 & 0.072 \\
Dragon Age 2 & 67 & 14,369 & 86.90 & 0.162 & 0.996 & 0.999 & 0.990 & 1.000 & 1.000 & 0.043 \\
Baldurs Gate II, scaled to $512\times512$ & 75 & 17,580 & 75.04 & 0.191 & 0.989 & 0.999 & 0.984 & 1.000 & 1.000 & 0.045 \\
Warcraft III, scaled to $512\times512$ & 36 & 7,992 & 53.32 & 0.279 & 0.938 & 1.000 & 0.954 & 1.000 & 1.000 & 0.079 \\
City/street Maps & 90 & 21,515 & 85.94 & 0.176 & 0.870 & 0.971 & 0.917 & 0.999 & 1.000 & 0.111 \\
Random 10\%, $256\times512$ & 75 & 18,241 & 99.27 & 0.049 & 0.737 & 0.930 & 0.833 & 0.988 & 1.000 & 0.206 \\
Random 20\%, $256\times512$ & 75 & 18,305 & 99.62 & 0.046 & 0.748 & 0.931 & 0.836 & 0.988 & 1.000 & 0.205 \\
Random 30\%, $256\times512$ & 75 & 18,175 & 99.86 & 0.053 & 0.792 & 0.948 & 0.851 & 0.994 & 1.000 & 0.204 \\
Random 40\%, $256\times512$ & 75 & 16,091 & 99.93 & 0.024 & 0.965 & 0.998 & 0.924 & 1.000 & 1.000 & 0.172 \\
NOAA bathymetry, $10800\times21600$ & 1 & 22,060 & 49.79 & 0.320 & 0.999 & 1.000 & 0.993 & 1.000 & 1.000 & 0.024 \\
\hline
\end{tabular}
}
\caption{Distribution of Route Length Ratios (including illegal routes) where Anya routes are generated by linear enrichment every 1 arc-second between root point pairs for which a great-circle connection would yield an intersection in Spherical geometry (Recipe 2). Compared to the mechanism of Table \ref{table:results_table_recipe_1} this simultaneously reduces the number of illegal routes generated by Anya and increases the corresponding route lengths further from those of Recipe 1. The column ``\% less" indicates the percentage of routes whereupon Spherical Anya route length is less than Anya route length. The opposite occurs for illegal Anya routes.}
\label{table:results_table_recipe_2}
\end{table}
\begin{table}[]
\resizebox{\textwidth}{!}{%
\begin{tabular}{|l|l|l|l|l|l|l|l|l|l|l|l|l|l|l|l|l|}
\hline
\multicolumn{1}{|c|}{\multirow{2}{*}{Benchmark}} & \multicolumn{1}{c|}{\multirow{2}{*}{\#Instances}} & \multicolumn{8}{c|}{Spherical Anya vs. Anya Route Length (Exclusive, Recipe 1)} \\ \cline{3-10} 
\multicolumn{1}{|c|}{} & \multicolumn{1}{c|}{} & \multicolumn{1}{c|}{\% less} & \multicolumn{1}{c|}{Min} & \multicolumn{1}{c|}{Q1} & \multicolumn{1}{c|}{Median} & \multicolumn{1}{c|}{Mean} & \multicolumn{1}{c|}{Q3} & \multicolumn{1}{c|}{Max} & \multicolumn{1}{c|}{StDev} \\ 
\hline
Starcraft & 12,594 & 81.75 & 0.132 & 0.970 & 0.997 & 0.964 & 1.000 & 1.000 & 0.077 \\
Dragon Age 2 & 9,511 & 54.80 & 0.162 & 0.999 & 1.000 & 0.990 & 1.000 & 1.000 & 0.049 \\
Baldurs Gate II, scaled to $512\times512$ & 12,454 & 51.81 & 0.191 & 0.996 & 1.000 & 0.987 & 1.000 & 1.000 & 0.045 \\
Warcraft III, scaled to $512\times512$ & 6,295 & 37.12 & 0.279 & 0.989 & 1.000 & 0.978 & 1.000 & 1.000 & 0.052 \\
City/street Maps & 17,094 & 76.20 & 0.180 & 0.875 & 0.974 & 0.922 & 1.000 & 1.000 & 0.107 \\
Random 10\%, $256\times512$ & 17,997 & 98.07 & 0.049 & 0.734 & 0.929 & 0.832 & 0.987 & 1.000 & 0.206 \\
Random 20\%, $256\times512$ & 17,860 & 98.35 & 0.046 & 0.740 & 0.927 & 0.832 & 0.986 & 1.000 & 0.206 \\
Random 30\%, $256\times512$ & 17,289 & 98.65 & 0.053 & 0.777 & 0.941 & 0.844 & 0.990 & 1.000 & 0.206 \\
Random 40\%, $256\times512$ & 13,716 & 99.02 & 0.024 & 0.945 & 0.995 & 0.911 & 1.000 & 1.000 & 0.183 \\
NOAA bathymetry, $10800\times21600$ & 19,924 & 41.46 & 0.320 & 1.000 & 1.000 & 0.996 & 1.000 & 1.000 & 0.017 \\
\hline
\end{tabular}
}
\caption{Distribution of Route Length Ratios excluding illegal routes wherein the Anya route is shorter than the Spherical Anya route. This occurs if and only if Anya returns a route which intersects a non-traversable cell at least once.}
\label{table:results_legal}
\end{table}

\noindent Results are shown for two distinct recipes for connecting the root points returned from Anya to form a route in Spherical geometry. \textit{Recipe 1} connects root points directly with great-circle lines. A consequence of this is that Recipe 1 achieves routes whose total length is closer to that of the optimal route returned by Spherical Anya, but with a relatively high number of routes intersecting non-traversable cells. We refer to routes as \textit{legal} whereupon this does not occur. All routes generated by Spherical Anya in Spherical geometry are both legal and optimal whereas this is not the case for Anya -- Anya not being designed for application in Spherical geometry and therefore guaranteeing neither legality nor optimality of routes. \textit{Recipe 2} enriches the route segments returned by Recipe 1 which intersect non-traversable cells with additional synthetic root points every arc-second along a straight (Euclidean) line between the root points where this occurs. Finally a great-circle line is assumed between root-point pairs in the enriched set. Recipe 2 produces fewer illegal routes than Recipe 1, albeit at the expense of generating legal routes with distance greater than those produced by Recipe 1.

For Spherical Anya the world map is first transformed into Spherical coordinates and the shortest-distance route is subsequently calculated. Under such circumstances it is guaranteed that Spherical Anya will find the optimal legal route whereas Anya may not, the relevant question here being \textit{at what cost}? What is the cost in execution time (ET) and route length (RL) of achieving optimality in Spherical geometry, compared with the simplification of assuming a Euclidean geodesic on a ``flat earth''.

Results are provided in Tables \ref{table:results_table_et}, \ref{table:results_table_recipe_1}, \ref{table:results_table_recipe_2} and \ref{table:results_legal}. The standard benchmark game maps which have been selected are \textit{Starcraft}, \textit{Dragon Age 2}, \textit{Baldurs Gate II} and \textit{Warcraft III}, with source and target uniformly selected with replacement an \textit{\#Instances} number of times over a \textit{\#Maps} number of maps for each game. These game maps have features which are interesting to study in their own right and we do not dwell on them here. For example Real Time Strategy games such as \textit{Starcraft} contain large, obstacle free, traversable areas with choke points (see Figure \ref{fig:sc_maps}) whereas Role-playing games such as \textit{Baldurs Gate II} are maze-like. Figure \ref{fig:benchmark_et_game} illustrates the ratio of Spherical Anya ET over Anya ET for these game maps and \textit{City/street} maps. Also presented are results from random maps (see Figure \ref{fig:benchmark_et_random}). Note that these maps are originally 2D maps, repurposed here for illustrating the statistical relationship between Anya and Spherical Anya given (i) different recipes for creating routes in Spherical geometry from the root points yielded by Anya (ii) different underlying measures for line-of-sight checking and finally, but significantly (iii) differing world maps. Beyond this statistical comparison it is the sea-voyages based on the NOAA bathymetric relief map of Earth \cite{noaa} which additionally bear practical utility in real world applications.

These simulations show that Spherical Anya is mostly similar in execution time to Anya, but on average it is slower (see Table \ref{table:results_table_et}). Moreover the median ET of Spherical Anya being faster or slower than Anya is strongly dependent on the features of each world map. The reward is only occasionally speed but always legality and optimality, Spherical Anya producing routes on average over 7.4\% shorter than routes generated by Anya across all maps using Recipe 1 (see Table \ref{table:results_legal}). The statistics for Random maps at $10\%$, $20\%$, $30\%$, $40\%$ clearly show the importance of the features of the world map as well as the number of tiles crossed (see Figure \ref{fig:benchmark_et_random}), the percentage here defining the Bernoulli random variable determining the existence of a non-traversable coordinate in the world map. Random maps produce route length ratios smaller than those of other maps and suffer a high RL ratio standard deviation compared to other world maps. More importantly for practical applications such as the navigation of seafaring vessels, median ET for Spherical Anya is slightly less than Anya and average ET is higher for almost all world map resolutions and route lengths (see Figures \ref{fig:benchmark_candlesticks} and \ref{fig:benchmark_length_candlesticks}). Never generating routes which intersect land, as Anya does in Spherical geometry, Spherical Anya always returns the optimal route. Refer to Appendix B, Figures \ref{fig:ex_1} and \ref{fig:ex_2} for illustrations of instances of ocean voyages. In summary, Spherical Anya being always optimal and producing legal routes in Spherical geometry, is faster on most routes compared with Anya for sea-faring routes, but more often it is slower.

\begin{figure*}[]
    \centering
    \begin{subfigure}[t]{0.5\textwidth}
        \centering
\includegraphics[height=1\linewidth]{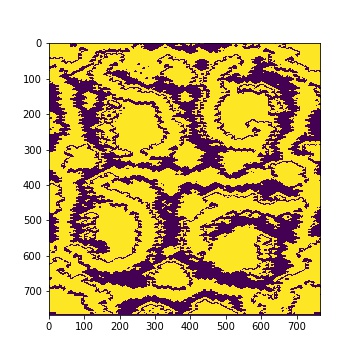}
        \caption{A Starcraft map instance, with traversable regions coloured yellow and non-traversable regions coloured purple. This can be considered either a flat map, as was originally intended, or a map in Spherical geometry via the mapping of equation  (\ref{eq:spherical_mapping}).}
    \end{subfigure}%
    ~~~~~~~~ 
    \begin{subfigure}[t]{0.5\textwidth}
        \centering
        \includegraphics[height=1\linewidth]{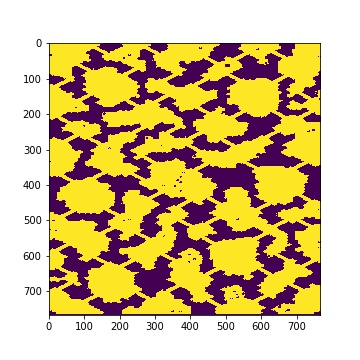}
        \caption{A second example of a Starcraft map instance. Here the ratio of non-traversable regions compared to traversable regions is 38.37\%.}
    \end{subfigure}
    
     \begin{subfigure}[t]{0.5\textwidth}
        \centering
        \includegraphics[height=.9\linewidth]{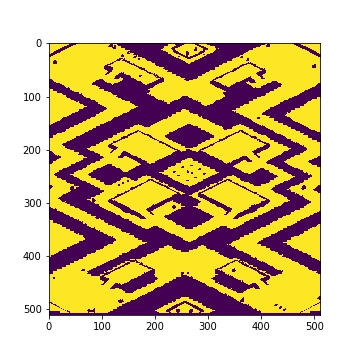}
        \caption{A third example of a Starcraft map instance. The ratio of non-traversable cells compared to traversable cells is 42.11\%. The average length of a route for this map is 10,920.2 km in Euclidean geometry and 10,829.7 km in Spherical geometry. To compare, the median distance over all Starcraft map instances in Euclidean and Spherical geometry is 13,822.4 km and 13,454.9 km respectively.}
    \end{subfigure}%
    ~~~~~~~~ 
    \begin{subfigure}[t]{0.5\textwidth}
        \centering
                \includegraphics[height=.9\linewidth]{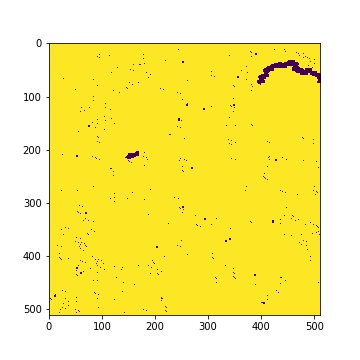}
        \caption{A final example of a Starcraft map instance. The ratio of non-traversable cells compared to traversable cells is 1.38\%. The average length of a route for this map is 10,606.3 km in Euclidean geometry and 9,577.0 km in Spherical geometry. The median ratio over all Starcraft map instances is 25.77\%.}
    \end{subfigure}
    \caption{Starcraft Real Time Strategy game map instance examples.}
    \label{fig:sc_maps}
\end{figure*}

\begin{figure*}[t!]
\centering
\includegraphics[width=\linewidth]{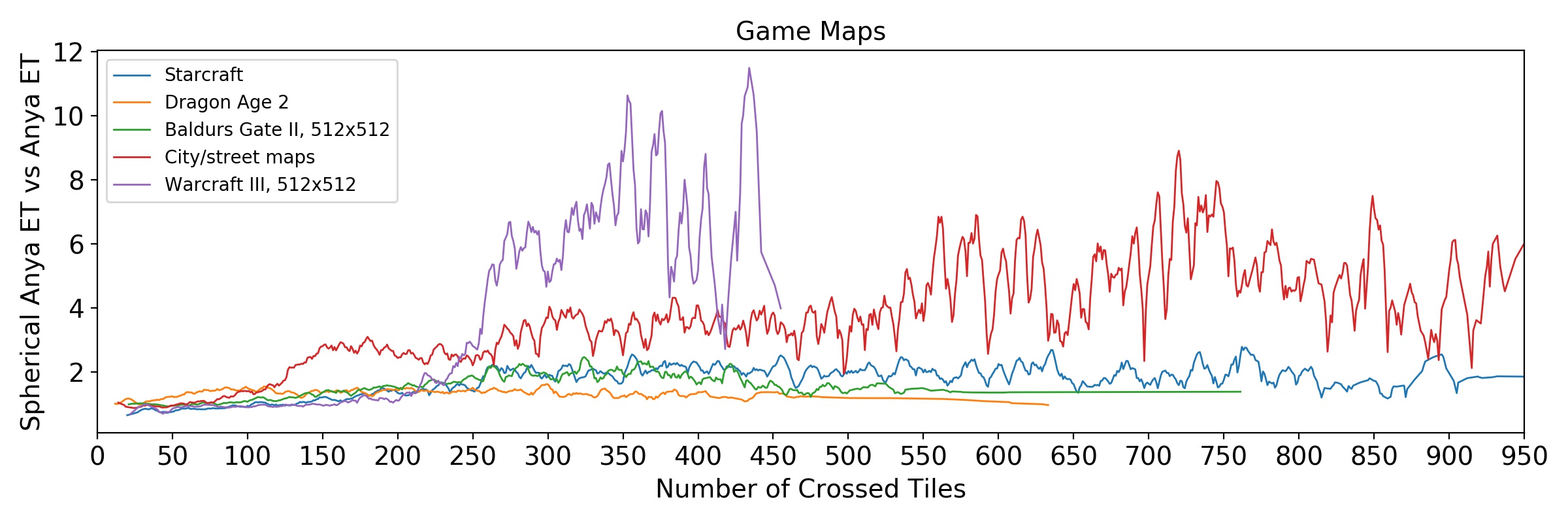}
        \caption{Spherical Anya execution time (ET) over Anya ET against Number of Tiles Crossed (CT) for Game and City/street maps.}
        \label{fig:benchmark_et_game}
\end{figure*}

\begin{figure*}[t!]
\centering
\includegraphics[width=\linewidth]{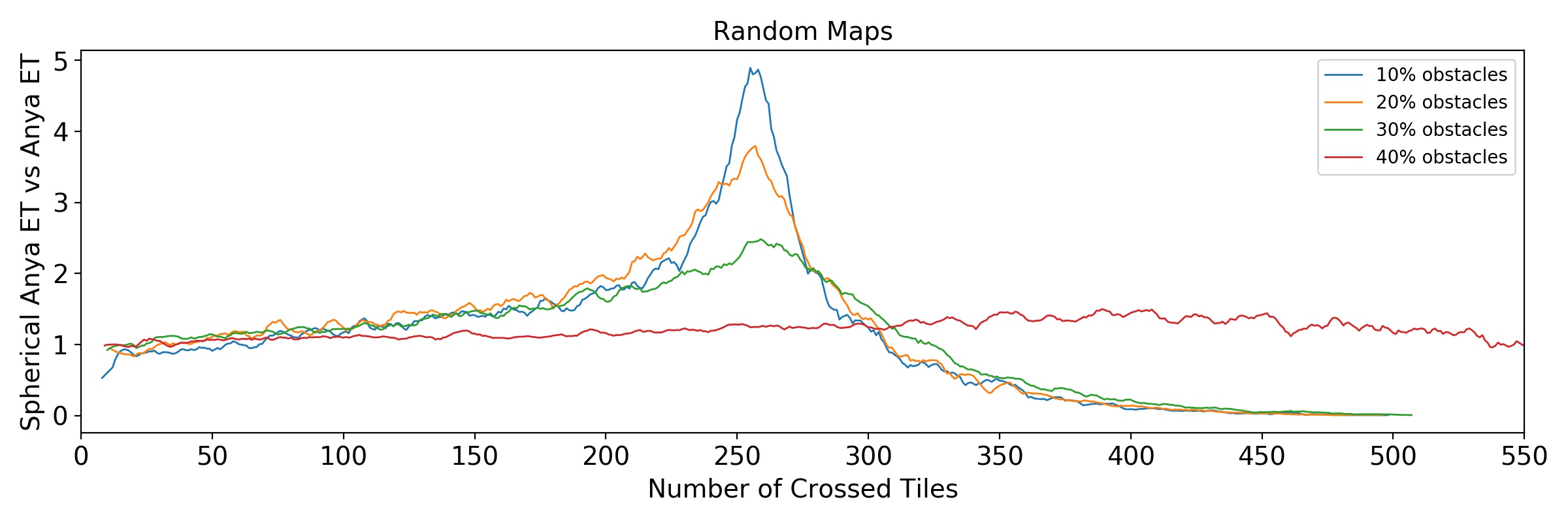}
        \caption{Spherical Anya execution time (ET) over Anya ET against Number of Tiles Crossed (CT) for Random Maps.}
    \label{fig:benchmark_et_random}
\end{figure*}

\begin{figure*}[t!]
\centering
\includegraphics[width=.6\linewidth]{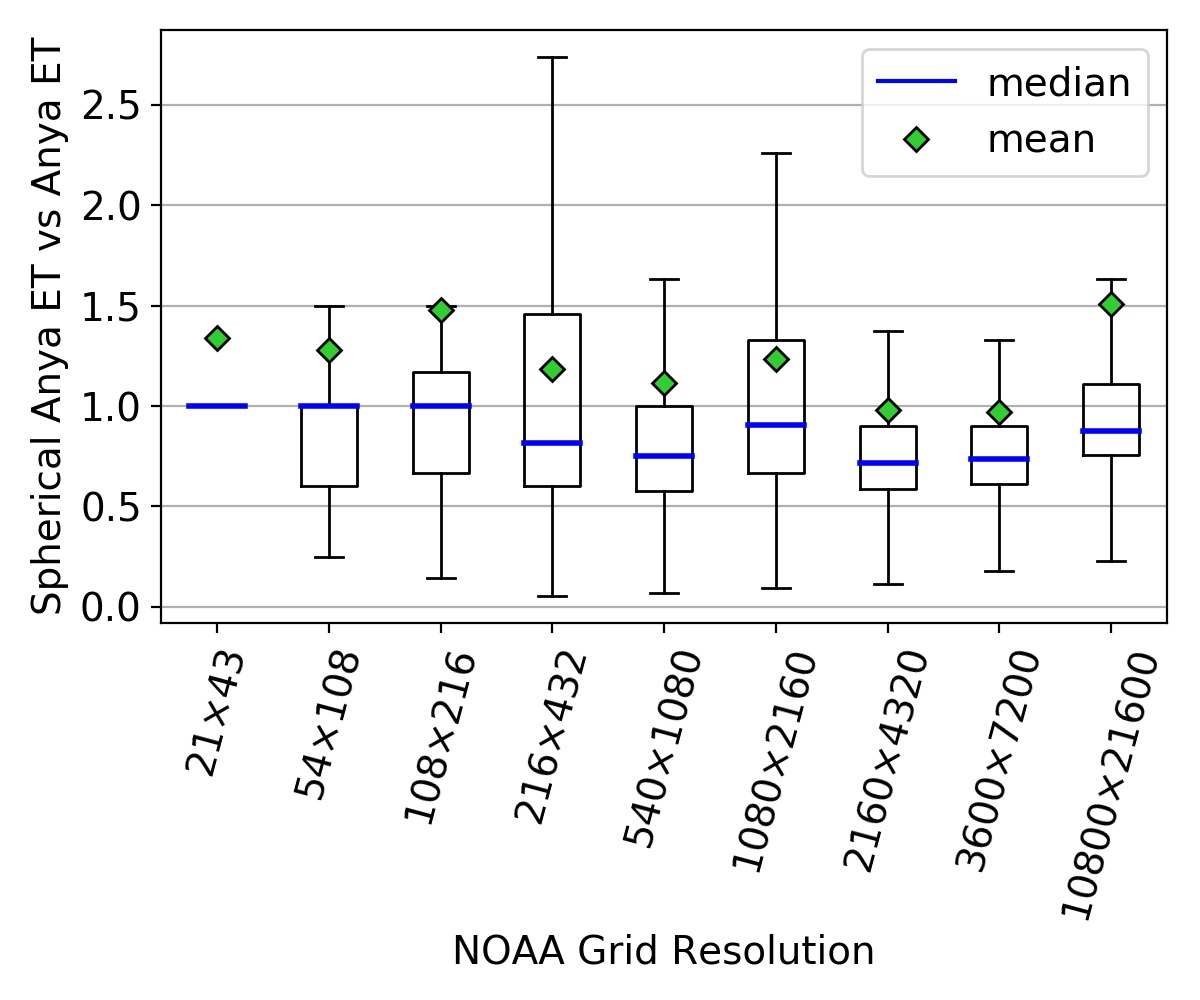}
        \caption{Spherical Anya execution time (ET) over Anya ET for Earth NOAA bathymetric maps taken at several resolutions. This indicates that for most routes one can calculate optimal voyages for seafaring vessels \textit{faster} than with a ``flat earth" approximation, albeit the execution time is on average slower.}
    \label{fig:benchmark_candlesticks}
\end{figure*}

\begin{figure*}[t!]
\centering
\includegraphics[width=1.1\linewidth]{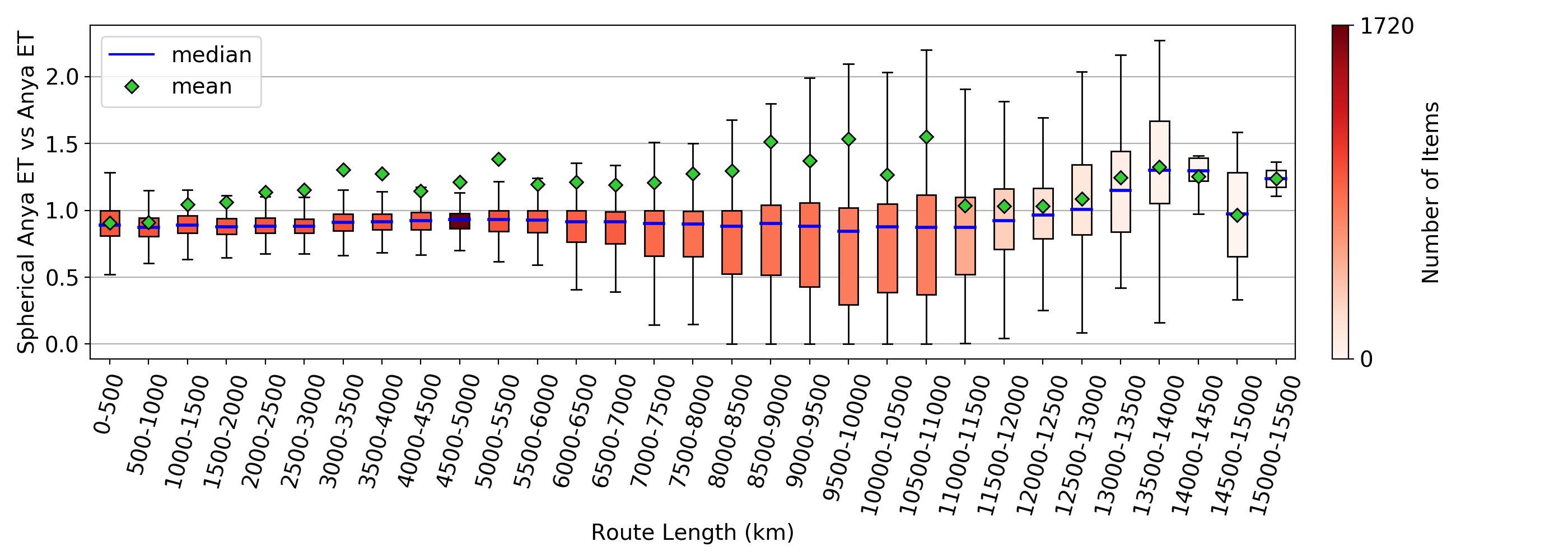}
        \caption{Spherical Anya execution time (ET) over Anya ET for buckets of route lengths at intervals of 500 km using the NOAA Earth bathymetric map at resolution $10,800\times21,600$.}
    \label{fig:benchmark_length_candlesticks}
\end{figure*}

\section{Conclusion}
It has been shown that a fast, efficient and optimal pathfinding algorithm for Spherical geometry exists in the form of Spherical Anya, a new online pathfinding algorithm which modifies and extends the principles of the Anya algorithm in Euclidean geometry. Modifications include the obvious change from the Euclidean geodesic to the great-circle geodesic of Spherical geometry. Further changes are a consequence of this, for instance one can no longer rely upon all turning points on an optimal route being corner points, and the existence of adjoint search nodes and successors which do not exist in Anya. The Spherical Anya algorithm for optimal pathfinding in Spherical geometry is described and illustrated with examples, and the primary mathematical consequence of the transition from Euclidean geometry to Spherical geometry is stated and proven. All remaining terminology, procedures and fundamental principles remaining the same as Anya ensures that Spherical Anya adopts the completeness, optimality and termination guarantees of its progenitor.

Finally, examples are shown applying Anya and Spherical Anya to several game maps, random maps and a map of Earth formed of bathymetric data from NOAA. In each case the map is assumed to be a representation of a world in Spherical geometry, and the application of Anya therefore a first order approximation of the optimal route, whereby a ``flat earth" approximation is made for all projections, but the resulting series of root points are mapped to Spherical geometry, route length being calculated as the sum of great-circle arcs between them (Recipe 1) or a sequence of enriched root point pairs joined by great-circle lines (Recipe 2). These simulations illustrate the following primary features and advantages of Spherical Anya. Firstly what is known mathematically in advance; that the result from Spherical Anya is always optimal in Spherical geometry and that Anya is not (and is not intended to be). Secondly, that applying Anya naively to a spherical world map to generate a sequence of root point pairs joined by great-circle lines yields a large proportion of illegal routes which at times cross non-traversable cells. The quantity of illegal routes is a function of the recipe used to convert root points in Euclidean geometry to Spherical geometry, however it is also important to note that as of yet there is no established method which can guarantee all routes produced by Anya are legal in Spherical geometry. And lastly, that Spherical Anya is more often faster than Anya for sea-voyages but in general and on average it is slower, the relative statistical properties depending not only on the algorithm but on the features of each world map.

\appendix
\section*{Appendix A. Gnomonic Projection}
A Gnomonic projection is a mapping of a coordinate system wherein all great-circle lines are represented by a straight line and vice versa (see Figure \ref{fig:gnomonic_projection}). Used widely in Astronomy, Seismology, Signals Intelligence, air traffic and maritime navigation, a Gnomonic projection cannot be used to enable an error-free application of Anya directly. The reason is understood with the following considerations.

The conversion of world map points defined in spherical coordinates with a Gnomonic projection centred on the equator $(0,0)$ is given by 
\begin{align}
	x &= \tan(\theta)\notag\\
	y &= \frac{\tan(\phi)}{\cos(\theta)} 
	\label{equation:gnomonic}
\end{align}

\begin{figure*}[h!]
    \centering
    \begin{subfigure}[t]{0.49\textwidth}
        \centering
\includegraphics[width=\linewidth]{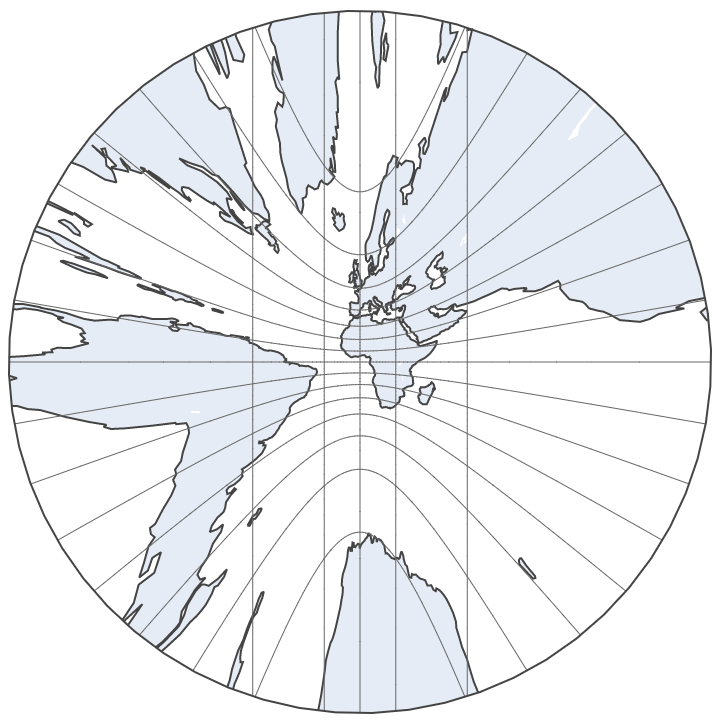}
        \caption{Gnomonic projection of Earth hemisphere centred on $(0,0)$. Longitudinal lines, being great circle routes, manifest as straight vertical lines in this figure.}
    \end{subfigure}%
    ~~~~~~~~
    \begin{subfigure}[t]{0.49\textwidth}
        \centering
\includegraphics[width=\linewidth]{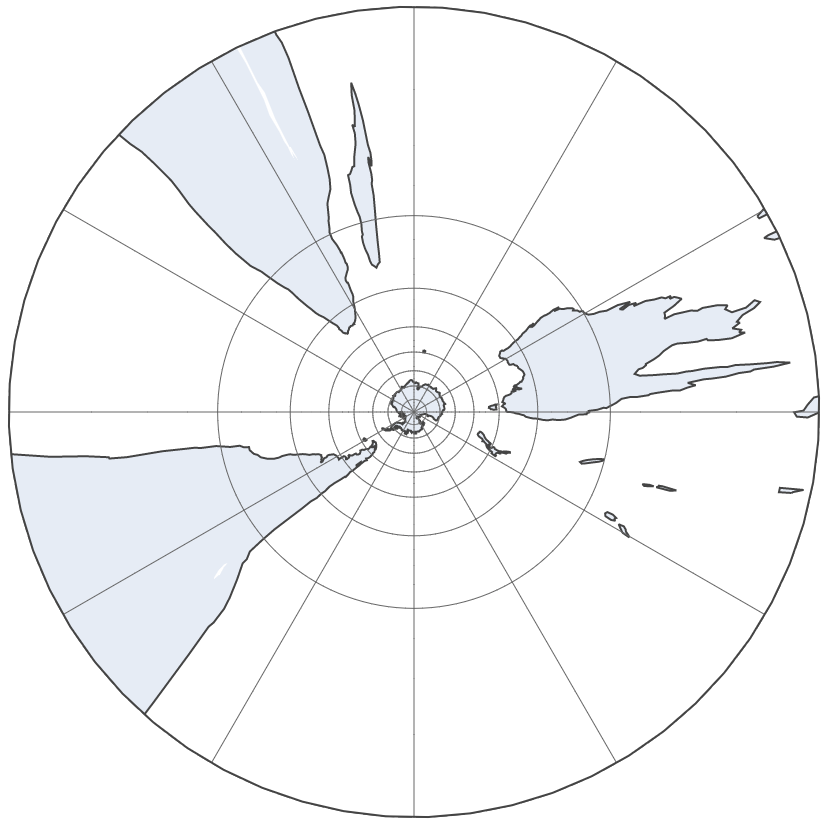}
        \caption{Hemisphere centred on Antarctic with Gnomonic projection centred on $(0,0)$. The hemisphere view being centred on the Antarctic in this case, great circle lines of longitude are now seen to be straight -- radially directed -- lines.}
    \end{subfigure}
    \caption{On a Gnomonic projection of a spherical world map all straight lines describe a great circle. This would be a suitable reference frame for pathfinding on a sphere, but for the fact errors in estimating obstacle boundaries cannot be removed in this case.}
    \label{fig:gnomonic_projection}
\end{figure*}

\noindent where $\theta$ is the azimuthal angle and $\phi$ is the polar angle. Hence latitudinal and longitudinal values from each hemisphere will map to the same $x$ and $y$ values, necessitating that routes in each hemisphere are treated with distinct coordinate systems. The dependence of $y$ upon both $\phi$ and $\theta$ implies that points on a line in Spherical geometry are not uniformly divided and an expanded basis set is required to define every $y$-axis with respect to a given $x$ (see Figure \ref {fig:grid_transformation}). In practice, this means that any visibility graph defined on the Gnomonic world map requires not an $M \times N$ as is the case in Euclidean and Spherical geometry, but instead an $M \times N^2$ matrix -- potentially in excess of available memory for a large number for real-world applications. More problematic is that implementing an error-free application of Anya using a Gnomonic projection is mathematically impossible as can be shown using the two non-traversable cells illustrated in Figure \ref{fig:grid_transformation}. The square corners of obstacle $A$, transformed by equation \ref{equation:gnomonic}, form obstacle $A'$ in Gnomonic space and analogously $B$ will transform to $B'$. Lines of constant longitude in Spherical geometry will remain straight horizontal lines under the Gnomonic projection, whereas lines of constant latitude will be curved. The corresponding edges of $A$ compared to $B$, equal in Spherical geometry in units of $\phi$ and $\theta$, will no longer be equal under the Gnomonic projection. The right and left edges of each obstacle, being defined by a line of constant latitude, form curved lines under the Gnomonic projection, therefore intersecting the lines of the grid regardless of the world map resolution. In other words, there will always be an error in the approximated boundary of the transformed obstacles and this error, being a function of $\theta$ and $\phi$, will be non-uniform. We should note that this statement applies exclusively to the error in representing mathematically ideal (perfectly accurate) obstacles using a Gnomonic projection and that it is specifically this which prohibits a guarantee of optimality being associated with the Gnomonic approach. However in the absence of ideal obstacles, such as representations of real-world objects which are prone to measurement error, both the Gnomonic approach and the approach of Spherical Anya as outlined in this paper may suffer sub-optimality.

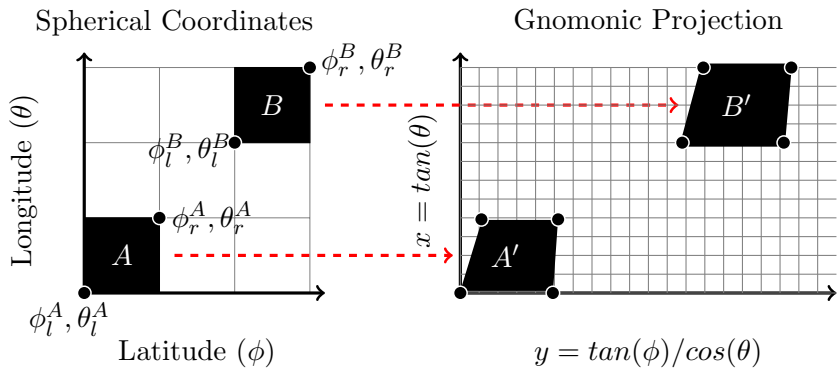
\begin{figure*}[]
\centering
\begin{tikzpicture}[scale=1]
\draw[step=1cm,gray,very thin] (0,0) grid (3,3);
\filldraw[fill=black, draw=black] (0,0) rectangle (1,1);
\filldraw[fill=black, draw=black] (2,2) rectangle (3,3);
\node[white] at (0.5, 0.5) {$A$};
\node[white] at (2.5, 2.5) {$B$};
\node[black] at (1.2,3.6) {Spherical Coordinates};
\node[black] at (7.5,3.6) {Gnomonic Projection};
\node[black, rotate=90] at (-.8, 1.5) {Longitude ($\theta$)};
\node[black] at (1.5,-.8) {Latitude ($\phi$)};
\draw[->, very thick] (5, 0) -- (5, 3.2);
\draw[->, very thick] (5, 0) -- (10, 0);
\draw[->, very thick] (0, 0) -- (0, 3.2);
\draw[->, very thick] (0, 0) -- (3.2, 0);

\draw[dashed, ->, red, very thick] (1.2, .5) -- (4.9, .5);
\draw[dashed, ->, red, very thick] (3.2, 2.5) -- (7.9, 2.5);

\foreach \i in {0,1,2,3,4,5,6,7,8,9,10,11,12}
	\draw[-, gray] (5, .25*\i) -- (10, .25*\i);
\foreach \z in {0,1,2,3,4,5,6,7,8,9,10,11,12,13,14,15,16,17,18,19}
	\draw[-, gray] (5+.25*\z, 0) -- (5+.25*\z, 3.);
	
	\node[trapezium, fill, minimum width=1.3cm, trapezium left angle=73, trapezium right angle=94] at (5.76,.5) {};
\node[trapezium, fill, minimum width=1.5cm, trapezium left angle=75, trapezium right angle=95] at (8.76,2.5) {};

\node[white] at (5.6, 0.5) {$A'$};
\node[white] at (8.67, 2.5) {$B'$};
\node[black, rotate=90] at (4.5, 1.5) {$x = tan(\theta)$};
\node[black] at (7.5,-.8) {$y = tan(\phi)/cos(\theta)$};

\node[black] at (-0.2, -0.35) {${\phi_{l}^A}, {\theta_{l}^A}$};
\node[black] at (1.7, 1.) {${\phi_{r}^A}, {\theta_{r}^A}$};
\node[black] at (1.4, 1.9) {${\phi_{l}^B}, {\theta_{l}^B}$};
\node[black] at (3.7, 3.1) {${\phi_{r}^B}, {\theta_{r}^B}$};

\fill[white] (0, 0) circle[radius=1mm];
\fill[white] (1, 1) circle[radius=1mm];
\fill[white] (2, 2) circle[radius=1mm];
\fill[white] (3, 3) circle[radius=1mm];
\fill[white] (5, 0) circle[radius=1mm];
\fill[white] (6.23, 0) circle[radius=1mm];

\fill[white] (5.28, .98) circle[radius=1mm];
\fill[white] (6.3, .98) circle[radius=1mm];
\fill[white] (8.23, 3) circle[radius=1mm];
\fill[white] (9.4, 3) circle[radius=1mm];
\fill[white] (7.95, 2) circle[radius=1mm];
\fill[white] (9.3, 2) circle[radius=1mm];

\fill[black] (0, 0) circle[radius=.8mm];
\fill[black] (1, 1) circle[radius=.8mm];
\fill[black] (2, 2) circle[radius=.8mm];
\fill[black] (3, 3) circle[radius=.8mm];
\fill[black] (5, 0) circle[radius=.8mm];
\fill[black] (6.23, 0) circle[radius=.8mm];
\fill[black] (5.28, .98) circle[radius=.8mm];
\fill[black] (6.3, .98) circle[radius=.8mm];
\fill[black] (8.23, 3) circle[radius=.8mm];
\fill[black] (9.4, 3) circle[radius=.8mm];
\fill[black] (7.95, 2) circle[radius=.8mm];
\fill[black] (9.3, 2) circle[radius=.8mm];

\end{tikzpicture}
\caption{Two non-traversable cells or \textit{obstacles} $A$ and $B$ in Spherical geometry (left). Under a Gnomonic projection, lines of constant latitude will intersect with the grid ensuring that any grid, regardless of the interval length, will always result in an error in the approximated boundary of a transformed obstacle and this error, being a function of $\theta$ and $\phi$, will be non-uniform.}
\label{fig:grid_transformation}
\end{figure*}

\appendix
\section*{Appendix B. Examples of routes in NOAA bathymetric dataset}

Optimal pathfinding for ocean voyages is an important practical problem for the maritime industry and global trade, wherein commercial vessels must choose the most expedient routes in order to ensure fast and efficient delivery of cargo.

To measure the performance of Spherical Anya on a real-world problem, more than 20,000 route instances were calculated using the NOAA bathymetric dataset at resolution $10,800 \times 21,600$. Comparing against Anya the statistics are as per Tables \ref{table:results_table_et}, \ref{table:results_table_recipe_1}, \ref{table:results_table_recipe_2} and \ref{table:results_legal}. Table \ref{table:results_table_recipe_1} includes all routes, including non-legal routes and 47.2\% of these correspond to cases where the length of the Anya route (as per Recipe 1) is shorter than the length of the corresponding Spherical Anya route, which occurs due to the geometric feature wherein a straight line in Euclidean geometry which doesn't intersect any barriers, nonetheless crosses them in Spherical geometry. Figure \ref{fig:ex_1} illustrates an example of this for the shortest sea-passage between New York, USA and Gothenburg, Sweden where a great-circle connecting Anya root points intersects land in Spherical geometry. An example of the opposite case, illustrated in Figure \ref{fig:ex_2}, is of Spherical Anya and Anya routes connecting Bahia Blanca, Argentina and Colombo, Sri Lanka, where neither of the routes intersect land. When this happens, Spherical Anya always returns the shortest route or at best, the two routes coincide.

\begin{figure*}[t!]
\centering
\includegraphics[width=0.95\linewidth]{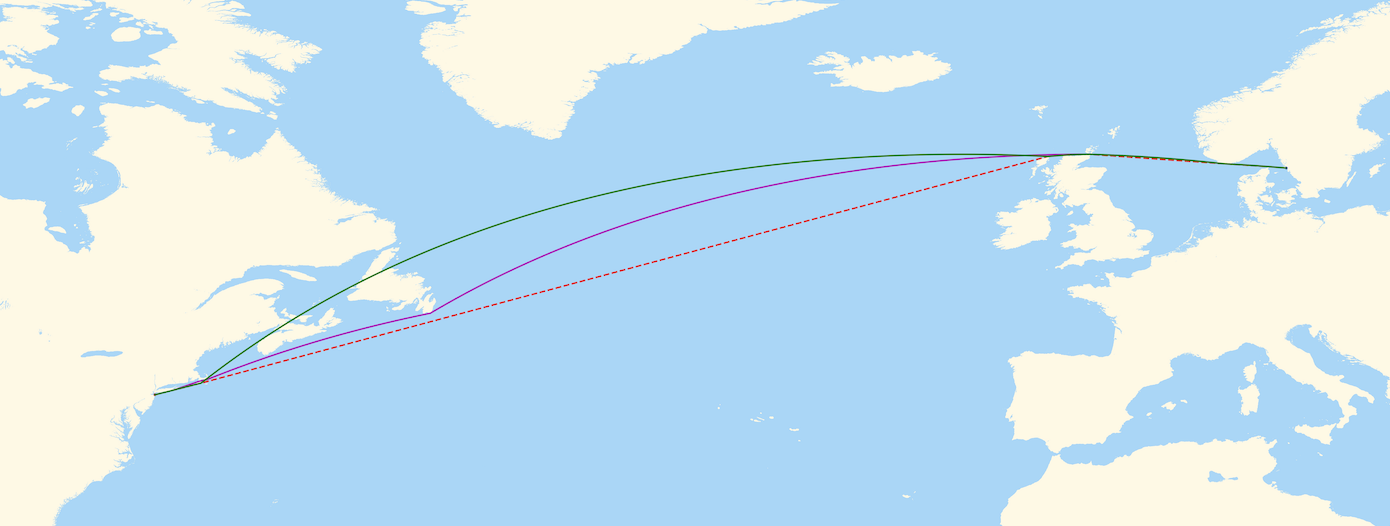}
        \caption{Spherical Anya route (solid purple line) and Anya route (dashed red line) from New York, USA to Gothenburg, Sweden. The sequence of great-circles connecting root points of the Anya route are shown with a green solid line. Some of these great-circles intersect land. The length of the Spherical Anya route is 6,222.888 km, while the total length of all green great-circles is 6,132.063 km. The Spherical Anya route is calculated 4.65 times faster than the Anya route.}
    \label{fig:ex_1}
\end{figure*}

\begin{figure*}[t!]
\centering
\includegraphics[width=0.95\linewidth]{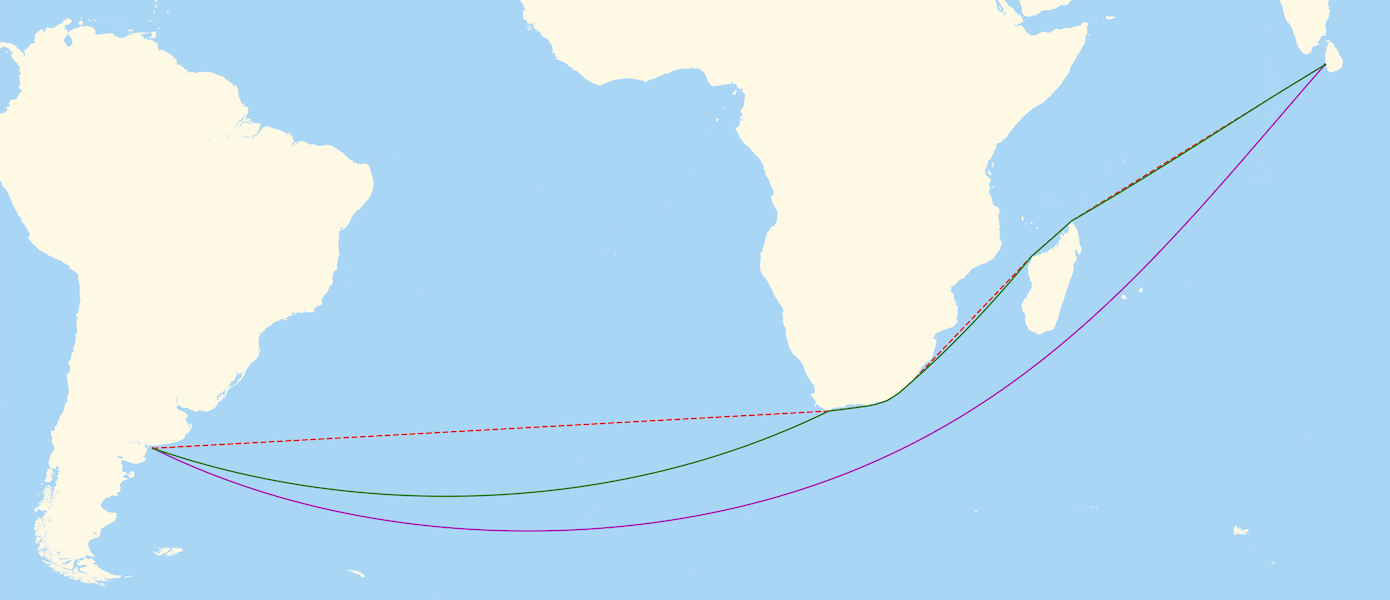}
        \caption{Spherical Anya route (solid purple line) and Anya route (dashed red line) from Bahia Blanca, Argentina to Colombo, Sri Lanka. The sequence of great-circles connecting root points of the Anya route are shown with a green solid line. None of these great-circles intersect land. The length of the Spherical Anya route is 14,715.343 km, while total length of all green great-circles is 14,918.216 km. The Spherical Anya route is calculated 311 times faster than the Anya route.}
    \label{fig:ex_2}
\end{figure*}

\FloatBarrier
\vskip 0.2in

\newpage
\bibliography{anya_in_spherical_geometry.bib}

\begin{thebibliography}{}

\bibitem[\protect\BCAY{Bency, Qureshi,\ \BBA\ Yip}{Bency et~al.}{2019}]{mayur}
Bency, M.~J., Qureshi, A.~H., \BBA\ Yip, M.~C. \BBOP2019\BBCP.
\newblock \BBOQ Neural path planning: Fixed time, near-optimal path generation via oracle imitation\BBCQ\
\newblock {\Bem arXiv:1904.11102 [cs.RO]}.

\bibitem[\protect\BCAY{Botea, Muller,\ \BBA\ Schaeffer}{Botea et~al.}{2004}]{botea}
Botea, A., Muller, M., \BBA\ Schaeffer, J. \BBOP2004\BBCP.
\newblock \BBOQ Near optimal hierarchical path-finding\BBCQ\
\newblock {\Bem J. Game Dev}, {\Bem 1(1)}, 7--28.

\bibitem[\protect\BCAY{Ferguson\ \BBA\ Stentz}{Ferguson\ \BBA\ Stentz}{2005}]{ferguson}
Ferguson, D.\BBACOMMA\  \BBA\ Stentz, A. \BBOP2005\BBCP.
\newblock \BBOQ Field {D}*: An interpolation-based path planner and replanner\BBCQ\
\newblock {\Bem ISRR}, 239–253.

\bibitem[\protect\BCAY{Harabor\ \BBA\ Grastien}{Harabor\ \BBA\ Grastien}{2013}]{harabor}
Harabor, D.\BBACOMMA\  \BBA\ Grastien, A. \BBOP2013\BBCP.
\newblock \BBOQ An optimal any-angle pathfinding algorithm\BBCQ\
\newblock In {\Bem Proceedings of ICAPS-23}.

\bibitem[\protect\BCAY{Harabor, Grastien, {\"O}z,\ \BBA\ Aksakalli}{Harabor et~al.}{2016}]{harabor2}
Harabor, D., Grastien, A., {\"O}z, D., \BBA\ Aksakalli, V. \BBOP2016\BBCP.
\newblock \BBOQ Optimal any-angle pathfinding in practice\BBCQ\
\newblock {\Bem Journal of Artificial Intelligence Research}, {\Bem 56(89)}.

\bibitem[\protect\BCAY{Hart, Nilsson,\ \BBA\ Raphael}{Hart et~al.}{1968}]{hart}
Hart, P.~E., Nilsson, M.~J., \BBA\ Raphael, B. \BBOP1968\BBCP.
\newblock \BBOQ A formal basis for the heuristic determination of minimum cost paths\BBCQ.
\newblock \BPGS\ 100--107.

\bibitem[\protect\BCAY{Hershberger\ \BBA\ Suri}{Hershberger\ \BBA\ Suri}{1999}]{hershberger}
Hershberger, J.\BBACOMMA\  \BBA\ Suri, S. \BBOP1999\BBCP.
\newblock \BBOQ An optimal algorithm for euclidean shortest paths in the plane\BBCQ\
\newblock {\Bem SIAM J. Comput.}, {\Bem 28(6)}, 2215–2256.

\bibitem[\protect\BCAY{Liu\ \BBA\ Arimoto}{Liu\ \BBA\ Arimoto}{1992}]{liu}
Liu, Y.-H.\BBACOMMA\  \BBA\ Arimoto, S. \BBOP1992\BBCP.
\newblock \BBOQ Path planning using a tangent graph for mobile robots among polygonal and curved obstacles\BBCQ\
\newblock {\Bem Int. J. Robot Research}, {\Bem 11}, 376–382.

\bibitem[\protect\BCAY{Lozano-P{\'e}rez\ \BBA\ Wesley}{Lozano-P{\'e}rez\ \BBA\ Wesley}{1979}]{perez}
Lozano-P{\'e}rez, T.\BBACOMMA\  \BBA\ Wesley, M.~A. \BBOP1979\BBCP.
\newblock \BBOQ An algorithm for planning collision-free paths among polyhedral obstacles\BBCQ\
\newblock {\Bem Commun. ACM}, {\Bem 22(10)}, 560–570.

\bibitem[\protect\BCAY{Mitchell, Mount,\ \BBA\ Papadimitriou}{Mitchell et~al.}{1987}]{mitchell}
Mitchell, J. S.~B., Mount, D.~M., \BBA\ Papadimitriou, C.~H. \BBOP1987\BBCP.
\newblock \BBOQ The discrete geodesic problem\BBCQ\
\newblock {\Bem SIAM J. Comput.}, {\Bem 16(4)}, 647–668.

\bibitem[\protect\BCAY{Nash, Daniel, Koenig,\ \BBA\ Felner}{Nash et~al.}{2007}]{nash}
Nash, A., Daniel, K., Koenig, S., \BBA\ Felner, A. \BBOP2007\BBCP.
\newblock \BBOQ Theta*: Any-angle path planning on grids\BBCQ\
\newblock {\Bem AAAI}, 1177--1183.

\bibitem[\protect\BCAY{NOAA}{NOAA}{2009}]{noaa}
NOAA \BBOP2009\BBCP.
\newblock \BBOQ {ETOPO1} earth global relief model\BBCQ\
\newblock \url{https://www.ngdc.noaa.gov/mgg/global/}.

\bibitem[\protect\BCAY{Panov, Yakovlev,\ \BBA\ Suvorov}{Panov et~al.}{2018}]{panov}
Panov, A.~I., Yakovlev, K.~S., \BBA\ Suvorov, R. \BBOP2018\BBCP.
\newblock \BBOQ Grid path planning with deep reinforcement learning: Preliminary results\BBCQ\
\newblock In {\Bem 8th Annual International Conference on Biologically Inspired Cognitive Architectures, BICA}, \lowercase{\BVOL}\ 123, \BPGS\ 347--353.

\bibitem[\protect\BCAY{Sartoretti, Kerr, Shi, Wagner, Kumar, Koenig,\ \BBA\ Choset}{Sartoretti et~al.}{2019}]{sartoretti}
Sartoretti, G., Kerr, J., Shi, Y., Wagner, G., Kumar, T. K.~S., Koenig, S., \BBA\ Choset, H. \BBOP2019\BBCP.
\newblock \BBOQ Primal: Pathfinding via reinforcement and imitation multi-agent learning\BBCQ\
\newblock {\Bem IEEE Robotics and Automation Letters}, {\Bem 4(3)}, 2378--2385.

\bibitem[\protect\BCAY{Si{\v s}l{\'a}k, Volf,\ \BBA\ P{\v e}chou{\v c}ek}{Si{\v s}l{\'a}k et~al.}{2009}]{sislak}
Si{\v s}l{\'a}k, D., Volf, P., \BBA\ P{\v e}chou{\v c}ek, M. \BBOP2009\BBCP.
\newblock \BBOQ Accelerated {A}* trajectory planning: Grid-based path planning comparison\BBCQ\
\newblock In {\Bem Proceedings of 4'th Workshop on Planning and Plan Execution for Real-World Systems}.

\bibitem[\protect\BCAY{Soboleva\ \BBA\ Yakovlev}{Soboleva\ \BBA\ Yakovlev}{2019}]{soboleva}
Soboleva, N.\BBACOMMA\  \BBA\ Yakovlev, K. \BBOP2019\BBCP.
\newblock {\Bem KI 2019: Advances in Artificial Intelligence.}, \lowercase{\BVOL}\ 11793.
\newblock Springer.

\bibitem[\protect\BCAY{Uras\ \BBA\ Koenig}{Uras\ \BBA\ Koenig}{2015}]{uras2015b}
Uras, T.\BBACOMMA\  \BBA\ Koenig, S. \BBOP2015\BBCP.
\newblock \BBOQ An empirical comparison of any-angle path-planning algorithms\BBCQ\
\newblock In {\Bem Proceedings of the Eighth Annual Symposium on Combinatorial Search}.

\bibitem[\protect\BCAY{Wang, Wu, Zhao, Peng,\ \BBA\ Lin}{Wang et~al.}{2019}]{wang}
Wang, J., Wu, N., Zhao, W., Peng, F., \BBA\ Lin, X. \BBOP2019\BBCP.
\newblock \BBOQ Empowering {A}* search algorithms with neural networks for personalized route recommendation\BBCQ.
\newblock \BPGS\ 539--547.

\bibitem[\protect\BCAY{Yap, Burch, Holte,\ \BBA\ Schaeffer}{Yap et~al.}{2011}]{yap}
Yap, P., Burch, N., Holte, R.~C., \BBA\ Schaeffer, J. \BBOP2011\BBCP.
\newblock \BBOQ Block {A}*: Database-driven search with applications in any-angle path-planning\BBCQ\
\newblock {\Bem AAAI}.

\end{thebibliography}
\bibliographystyle{theapa}

\end{document}